\newcommand{\complClFont}[1]{\normalfont\textbf{#1}}         
\newcommand{\logicClFont}[1]{\mathcal{#1}}        
\newcommand{\problemFont}[1]{\mathrm{#1}}         
\DeclarePairedDelimiterX\set[1]\lbrace\rbrace{#1}
\newcommand{\Vars}{\protect\ensuremath{\mathrm{Fr}}}
\newcommand{\SubForm}[1]{\mathrm{SF}\ifx#1\empty\else(#1)\fi}
\newcommand{\var}[1]{\mathsf{VAR}\ifx#1\empty\else(#1)\fi}
\newcommand{\SF}[1]{\mathsf{SF}\ifx#1\empty\else(#1)\fi}
\newcommand{\multiple}[1]{\protect\ensuremath{f_\#\ifx#1\empty\else(#1)\fi}}
\newcommand{\depa}[2]{{\mathsf{dep}}({\mathbf{#1}};{{#2}})}
\newcommand{\depas}[2]{{\mathsf{dep}}({#1};{#2})}
\newcommand{\inca}[2]{{\mathbf{#1}}\subseteq\mathbf{#2}}
\newcommand{\indepa}[3]{{\mathbf{#1}}\bot_{\mathbf{#3}}\mathbf{#2}}
\newcommand{\para}{\protect\ensuremath{\complClFont{para}}} 
\newcommand{\FPT}{\protect\ensuremath{\complClFont{FPT}}} 
\newcommand{\NP}{\protect\ensuremath{\complClFont{NP}}} 
\newcommand{\Ptime}{\protect\ensuremath{\complClFont{P}}} 
\newcommand{\XP}{\protect\ensuremath{\complClFont{XP}}}
\newcommand{\WP}{\protect\ensuremath{\complClFont{W[P]}}}
\newcommand{\W}[1]{\protect\ensuremath{\complClFont{W}\ifx#1\empty\else[#1]\fi}}
\newcommand{\p}{\text{p-}}
\newcommand{\rel}{\mathrm{rel}}
\newcommand{\calA}{\mathcal{A}}
\newcommand{\calG}{\mathcal{G}}
\newcommand{\calC}{\mathcal{C}}
\newcommand{\preduction}{\ensuremath{\leq^{\Ptime}_m}}
\newcommand{\fptreduction}{\ensuremath{\leq^{\FPT}}}
\newcommand{\dfn}{\mathrel{\mathop:}=}
\newcommand{\FO}{\protect\ensuremath{\logicClFont{FO}}\xspace}
\newcommand{\Fragment}{\protect\ensuremath{\mathrm{\Theta}}}
\newcommand{\PiForm}[1]{\protect\ensuremath{\mathrm{\Pi}_{#1}}}
\newcommand{\SigForm}[1]{\protect\ensuremath{\mathrm{\Sigma}_{#1}}}
\newcommand{\IND}{\protect\ensuremath{\logicClFont{FO}(\bot)}\xspace}
\newcommand{\INC}{\protect\ensuremath{\logicClFont{FO}(\subseteq)}\xspace}
\newcommand{\D}{\protect\ensuremath{\logicClFont{FO}(\mathsf{dep})}\xspace}
\newcommand{\DPi}[1]{\protect\ensuremath{\D\text-\PiForm{#1}}\xspace}
\newcommand{\INCPi}[1]{\protect\ensuremath{\INC\text-\PiForm{#1}}\xspace}
\newcommand{\INDPi}[1]{\protect\ensuremath{\IND\text-\PiForm{#1}}\xspace}
\newcommand{\DSigma}[1]{\protect\ensuremath{\D\text-\SigForm{#1}}\xspace}
\newcommand{\INCSigma}[1]{\protect\ensuremath{\INC\text-\SigForm{#1}}\xspace}
\newcommand{\INDSigma}[1]{\protect\ensuremath{\IND\text-\SigForm{#1}}\xspace}
\newcommand{\ESO}{\protect\ensuremath{\logicClFont{ESO}}\xspace}
\newcommand{\WD}{\problemFont{WD}}
\newcommand{\WTD}{\problemFont{WT}}
\newcommand{\FD}{\problemFont{FD}}
\newcommand{\IndSet}{\protect\ensuremath{\textsc{IndependentSet}}\xspace}
\newcommand{\DomSet}{\protect\ensuremath{\textsc{DominatingSet}}\xspace}
\newcommand{\Clique}{\protect\ensuremath{\textsc{Clique}}\xspace}
\newcommand{\wsatpos}[2]{\protect\ensuremath{\problemFont{WSAT}(\Gamma^+_{#1}\ifx#2\empty\else,#2\fi)}} 
\newcommand{\wsatneg}[1]{\protect\ensuremath{\problemFont{WSAT}(\Gamma^-_{#1})}}
\newcommand{\wsat}[1]{\protect\ensuremath{\problemFont{WSAT}(\Gamma_{#1})}} 
\newcommand{\WSAT}{\protect\ensuremath{\problemFont{WSAT}}}
\newcommand{\problemdef}[3]{%
\begin{center}
\begin{tabular}{lp{10cm}}\toprule
\textsf{\bfseries Problem:}& #1 \\\midrule
\textsf{\bfseries Input:}& #2.\\
\textsf{\bfseries Question:}& #3?\\\bottomrule
\end{tabular}
\end{center}
}
\newcommand{\paraproblemdef}[4]{%
\begin{center}
\begin{tabular}{lp{10cm}}\toprule
\normalfont\textsf{\bfseries Problem:}& \normalfont#1 \\\midrule
\normalfont\textsf{\bfseries Input:}& \normalfont#2.\\
\normalfont\textsf{\bfseries Parameter:}& \normalfont#3.\\
\normalfont\textsf{\bfseries Question:}& \normalfont#4?\\\bottomrule
\end{tabular}
\end{center}
}
\newcommand{\VAR}{\mathrm{VAR}}
\newcommand{\Fr}{\Vars}
\newcommand{\dom}{\mathrm{dom}}
\newcommand{\tuple}[1]{\mathbf{#1}}
\title{Parameterized Complexity of Weighted Team Definability}
\author{Juha Kontinen}{Department of Mathematics and Statistics, University of Helsinki, Finland}{juha.kontinen@helsinki.fi}{https://orcid.org/0000-0003-0115-5154}{Partially funded by Academy of Finland grant 338259}
\author{Yasir Mahmood}{Institut f\"{u}r Theoretische Informatik, Leibniz Universit\"{a}t Hannover, Germany}{mahmood@thi.uni-hannover.de}{https://orcid.org/0000-0002-5651-5391}{}
\author{Arne Meier}{Institut f\"{u}r Theoretische Informatik, Leibniz Universit\"{a}t Hannover, Germany}{meier@thi.uni-hannover.de}{https://orcid.org/0000-0002-8061-5376}{Partially funded by DFG grant ME 4279/3-1}
\author{Heribert Vollmer}{Institut f\"{u}r Theoretische Informatik, Leibniz Universitf\"{a}t Hannover, Germany}{vollmer@thi.uni-hannover.de}{https://orcid.org/0000-0002-9292-1960}{Partially funded by DAAD Project-ID 57570031}
\authorrunning{J. Kontinen Y. Mahmood, A. Meier, and H. Vollmer } 
\keywords{Parameterized complexity, descriptive complexity, weighted definability, team semantics, dependence logic, independence logic, inclusion logic} 
\begin{document}

\maketitle

\begin{abstract}
In this article, we study the complexity of weighted team definability for logics with team semantics. 
This problem is a natural analogue of one of the most studied problems in parameterized complexity, the notion of weighted Fagin-definability, which is formulated in terms of satisfaction of first-order formulas with free relation variables.
We focus on the parameterized complexity of weighted team definability for a fixed formula $\varphi$ of central team-based logics.
Given a first-order structure $\cal A$ and the parameter value $k\in \mathbb N$ as input, the question is to determine  whether $\calA,T\models \varphi$ for some team $T$ of size $k$. 
We show several results on  the complexity of this problem for dependence, independence, and inclusion logic formulas.
Moreover, we also relate the complexity of weighted team definability to the complexity classes in the well-known W-hierarchy as well as paraNP. 
\end{abstract}

\section{Introduction}
In this article, we study the parameterized complexity of weighted team definability for logics in team semantics. 
Team definability is a natural analogue of the notion  of Fagin-definability whose weighted version can be used to characterize the W-hierarchy in parameterized complexity~\cite{DFR98}. 
We give several results on the complexity of this problem for dependence, independence, and inclusion logic formulas.

The birth of the nowadays established logics of dependence and independence can be traced back to the  introduction of dependence logic in 2007~\cite{DBLP:books/daglib/0030191}. 
In team semantics, formulas are interpreted by sets of assignments (teams) instead of a single assignment as in Tarski's semantics of first-order logic. 
Syntactically dependence logic extends first-order logic by new dependence atomic formulas (dependence atoms) $\depa{x}{y}$ expressing that the values of variables $\mathbf x$ functionally determine the value of the variable $y$ in the team under consideration. 
Independence and inclusion logics are further extensions of first-order logic by independence atoms $\indepa{x}{y}{z}$ and inclusion atoms $\mathbf x \subseteq \mathbf y$ which essentially correspond to embedded multivalued dependences and inclusion dependences from database theory~\cite{gradel10,galliani12}.  

For the applications, it is important to understand the complexity theoretic aspects of team-based logics.  
During the past ten years, the expressivity and complexity theoretic aspects of logics in first-order (also propositional~\cite{DBLP:journals/apal/YangV17}, modal~\cite{DBLP:journals/logcom/HellaKMV19,DBLP:journals/tocl/HellaKMV20}, temporal~\cite{DBLP:conf/lics/GutsfeldMOV22} and probabilistic~\cite{DBLP:conf/foiks/0001HKMV18}) team semantics have been studies extensively (see, e.g., \cite{HannulaKVV18,Luck19,HannulaKBV20,DurandKRV22}). 
The baseline for these studies are the well-known results stating that the sentences of dependence logic and independence logic are equivalent to existential second-order logic while inclusion logic corresponds to positive greatest fixed point logic and thereby captures $\Ptime$ over finite (ordered) structures~\cite{gallhella13}. 
In team semantics results for sentences of the logic do not immediately extend to open formulas. 
In particular, the open formulas of dependence logic correspond in expressive power to sentences of $\ESO$ with an extra relation encoding the team that occurs  only negatively in the sentence~\cite{KontinenV09}. 
For independence logic, the requirement of negativity can be lifted~\cite{galliani12}. For inclusion logic an analogous result shows that any first-order sentence $\varphi(R)$ whose truth is preserved under $R$-unions  can be expressed by an inclusion logic formula $\varphi^*(\mathbf x)$. 
In other words, for all $\calA$ and teams $T\neq \emptyset$:
\[\calA, T \models \varphi^*(\mathbf x)  \Leftrightarrow \calA \models \varphi(\rel(T)/R),\]
where $\rel(T)$ is a relation encoding the team $T$ \cite{gallhella13}. 
These result can be  used to relate weighted team definability to  weighted Fagin-definablity. 
However, it is instructive to note that, due to higher expressive power of the logics considered in this article, the syntactic complexity of a formula does not in general correlate with the complexity of the model-checking of the formula. 
In particular, any formula of dependence and independence logic is logically equivalent to a formula with $\forall\exists$-quantifier prefix~\cite[Theorem 6.15]{DBLP:books/daglib/0030191}~\cite[Theorem 4.9]{KontinenV09}.

A formalism to enhance the understanding of the inherent intractability of computational problems is brought by the framework of parameterized complexity~\cite{DBLP:series/txcs/DowneyF13}. 
Here, one aims to find parameters relevant for practice allowing to solve the problem by algorithms running in time $f(k)\cdot n^{O(1)}$, for some computable function $f$, where $k$ is the parameter value and $n$ is the input length.
Problems with such a running time are called \emph{fixed-parameter tractable} ($\FPT$) and correspond to efficient computation in the parameterized setting.
%
The problems solvable within the runtimes of the form $f(k)\cdot n^{O(1)}$ with respect to nondeterministic machines belong to the complexity class $\para\NP\supseteq\FPT$.
Moreover, restricting the amount of nondeterminism allows to study a subclass $\W\Ptime\subseteq \para\NP$. 
The complexity class $\W\Ptime$ is defined via nondeterministic machines that have at most $h(k)\cdot\log n$ many nondeterministic steps, where $h$ is a computable function. 
In between $\FPT$ and $\W\Ptime$, a presumably infinite $\complClFont{W}$-hierarchy is contained: $\FPT\subseteq\W1\subseteq\W2\subseteq\dots\subseteq\W\Ptime$.
It is unknown whether any of these inclusions is strict.
Showing $\W1$-hardness of a problem intuitively corresponds to being intractable in the parameterized world.


\begin{table}
	\centering
	\begin{tabular}{cllc}\toprule
		\bfseries Logic &\bfseries  $\exists\,\varphi_t$ s.t.\ $\p\WTD_{\varphi_t}$ is &\bfseries  Condition &\bfseries Result\\\midrule
		$\FO$ 
		&
		in TC$^0$ & unparameterized, all formulas& Thm.~\ref{thm:WT-FO}\\
		$\INC$
		& in $\FPT$/$\WP$ & all sentences / all formulas & Thm.~\ref{inc:sentences}/Thm.~\ref{INC:WP}\\ 
		& $\W1$-hard, $\in\W2$ & any quantifier-free formula, without $\lor$ & Cor.~\ref{cor:inc-lor}\\
		& $\W t$-complete
		& for all even $t\in\mathbb N$ & Cor.~\ref{cor:inc-wt}\\
		$\D$ & $\W t$-complete&for all odd $t\in\mathbb N$ & Cor.~\ref{cor:D-wt}\\
		& $\para\NP$-complete & sentence / formula & Thm.~\ref{D:sentences}/Thm.~\ref{thm:WT-D}\\
		$\IND$ 
		& $\W t$-complete & for all $t\in\mathbb N$ & Thm.~\ref{thm:WT-I} (1.)\\
		& $\W\Ptime$-complete& formula & Thm.~\ref{thm:WT-I} (2.)\\
		& $\para\NP$-complete& sentence / formula & Thm.~\ref{thm:WT-I} (3.) / (4.)\\
		\bottomrule
	\end{tabular}
	\caption{Partial overview of our results concerning weighted team definability with pointers to the respective theorem or corollary.}\label{tbl:overview}
\end{table}
\subparagraph*{Our contributions}
We define and study the parameterized complexity of weighted team definability with respect to formulas of several team-based logics.
Moreover, we establish the relationship between our framework and the problem of weighted Fagin definability.
In more details, we explore the complexity of weighted team definability in parameterized setting for dependence, independence and inclusion logic formulas as well as sentences. 
Thereby, we prove and obtain novel logical characterizations of, and new complete problems for, the aforementioned central parameterized complexity classes, i.e., the $\W{}$-hierarchy, $\W\Ptime$, and $\para\NP$. 
Table~\ref{tbl:overview} gives a partial overview of our results concerning weighted team definability. 
\subparagraph*{Related work} The complexity of counting/enumerating satisfying teams for a fixed first-order formula of team-based logic has been studied before \cite{HaakKMVY19,HaakMMV22}. 
Furthermore, there are also recent works on the parameterized complexity model-checking and satisfiability for propositional and first-order team-based logics \cite{DBLP:conf/foiks/MeierR18,DBLP:conf/foiks/0002M20, DBLP:journals/corr/abs-2105-14887,KontinenMM22}.
	Regarding the descriptive complexity, Downey~et~al.~\cite{DFR98} explored the logical characterization of the classes in the $\W{}$-hierarchy. 

\section{Preliminaries}
We require a basic knowledge of standard notions from classical complexity theory \cite{DBLP:books/daglib/0072413}. 
The classical complexity classes we encounter mostly in this work are $\Ptime$ and $ \NP$ together with their respective completeness notions, employing polynomial time many-one reductions ($\preduction $). 
Moreover, we assume the reader is familiar with the basic first-order (predicate) logic~\cite{DBLP:books/daglib/0082516}.
In the following, we define a few important classes of first-order formulas which are relevant to the results in this work.

\subparagraph*{FO-Formula Classes}
The class of all first-order formulas is denoted by $\FO$. 
Let $\tau$ be a relational vocabulary and $R\in \tau$ be a relation symbol of arity $r$.
An atomic formula is a formula of the form $x = y$ or $R(x_1,\dots, x_r)$. 
A literal is an atomic or a negated atomic formula. 
A quantifier-free formula is a formula that contains no quantifiers and a formula is in negation normal form (NNF) if the negation symbols occurs only front of atoms.
A formula $\varphi$ is in prenex normal form if $\varphi$ has the form $Q_1x_1\dots Q_nx_n \psi$, where $\psi$ is quantifier free and $Q_1,\dots,Q_n\in\{\exists, \forall\}$.
The classes $\SigForm 0$ and $\PiForm 0$ both consist of quantifier free formulas.
Then, for $t\geq 0 $, the class $\SigForm{t+1}$ includes all formulas of the form $\exists x_1\dots \exists x_\ell \varphi$, where $\varphi\in \PiForm{t}$.
Similarly, $\PiForm{t+1}$ includes all formulas of the form $\forall x_1\dots \forall x_\ell \varphi$, where $\varphi\in \SigForm{t}$.

\subparagraph{Fagin Definability}
The first-order variables range over individual elements of the universe.
In second-order logic, one also quantifies relation variables which range over relations on the universe.
We now introduce first-order formulas where we also allow relation variables.  
Let $\tau$ be a vocabulary, $X_i$ for $i\leq n$ be free relation variables of arity $s_i$ and $\varphi(X_1,\dots, X_n)$ be a $\FO$-formula in $\tau$.
Moreover, let $\calA$ be a $\tau$-structure and $S_i\subseteq A^{s_i}$ be relations over $\calA$ for $i\leq n$.
Then we say that the tuple $\bar S = (S_1,\dots,S_n)$ is a solution for $\varphi$ in $\calA$ if $\calA \models \varphi(\bar S)$.
We call the following decision problem, the problem \emph{Fagin-defined} by $\varphi$. 

\problemdef{$\FD_\varphi$ --- Fagin definability for fixed $\varphi\in\FO$}{A $\tau$-structure $\calA$}{Is there a solution for $\varphi$ in $\calA$}

Let $\Fragment\subseteq\FO $ be a class of formulas, then by $\FD\text-\Fragment$ we denote the class of all problems $\FD_\varphi$ such that $\varphi\in \Fragment$.
The following result regarding $\FO$ is known.
	\begin{proposition}[{\cite[Cor. 4.35]{DBLP:series/txtcs/FlumG06}}]
		$\NP = \FD\text-\FO = \FD\text-\PiForm 2$.
	\end{proposition}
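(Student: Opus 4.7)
The plan is to prove the chain $\FD\text-\PiForm 2 \subseteq \FD\text-\FO \subseteq \NP \subseteq \FD\text-\PiForm 2$, which collapses to equality. The first inclusion is immediate from $\PiForm 2 \subseteq \FO$. For the second, I would fix $\varphi(X_1,\dots,X_n) \in \FO$ with relation-variable arities $s_1,\dots,s_n$. On input a $\tau$-structure $\calA$, a nondeterministic algorithm guesses relations $S_i \subseteq A^{s_i}$; this guess has length $\sum_i |A|^{s_i}$, which is polynomial in $|\calA|$ because $n$ and the $s_i$ are fixed constants. The algorithm then verifies $\calA \models \varphi(\bar S)$ by standard first-order model checking, which for fixed $\varphi$ runs in polynomial time in $|\calA|$.

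The substantive direction is $\NP \subseteq \FD\text-\PiForm 2$, which is essentially Fagin's theorem. Given a language $L \in \NP$ decided by a nondeterministic Turing machine $M$ in time $n^c$, I would introduce free relation variables encoding an accepting computation of $M$: relations for tape contents indexed by time step and cell, for the head position, for the internal state at each time step, and for the sequence of nondeterministic choices. Tuples from $A^c$ serve as indices for tape cells and time steps, so one also needs a linear order (with successor) on $A^c$; on unordered input structures this is provided by additional free relation variables whose correctness is enforced by universal axioms. The overall formula $\varphi$ then asserts that $\bar X$ encodes an accepting run of $M$ on $\calA$ as a conjunction of local consistency conditions: the initial configuration matches the encoded input, the transition at every (time, cell) pair is legal, and the final configuration is accepting. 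Each condition has a quantifier-free body once one universally quantifies over all time and cell indices, yielding a $\PiForm 1 \subseteq \PiForm 2$ formula.

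The main obstacle will be this last direction: expressing the transition relation of $M$ as a quantifier-free body beneath only universal quantifiers requires a careful syntactic encoding (typically a large disjunction over transition cases), and on unordered structures one must simultaneously supply a successor and order on $A^c$ via extra free relation variables and verify their correctness without introducing additional quantifier alternations. If the alternations are not controlled, one drops out of $\PiForm 2$; the delicate bookkeeping here is precisely where the classical proof of Fagin's theorem concentrates its work, and referring to the well-worked-out version in \cite{DBLP:series/txtcs/FlumG06} is the cleanest route.
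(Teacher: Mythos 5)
The paper does not prove this proposition at all---it is imported verbatim from Flum and Grohe (Cor.~4.35)---and your argument is precisely the standard one behind that citation: the two trivial inclusions $\FD\text-\PiForm{2}\subseteq\FD\text-\FO\subseteq\NP$ by guess-and-check on polynomial-size relations, plus Fagin's theorem with the first-order part of the computation-encoding formula normalized to a $\forall^*\exists^*$ prefix. The only slip is your passing claim that the encoding is $\PiForm{1}$: on unordered structures the clauses asserting that the guessed order/successor relations are total and that each configuration has a legal successor genuinely need the inner existential block, but since you only invoke $\PiForm{1}\subseteq\PiForm{2}$ and explicitly retreat to $\PiForm{2}$ in your final paragraph, this does not affect the correctness of the proof.
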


Next we introduce the following \emph{weighted} version of Fagin definabilty, where we restrict our solution to have a specific size for a single free relation symbol $S$ of arity $s$.
\problemdef{$\WD_\varphi$ --- weighted Fagin definability for fixed $\varphi\in\FO$}{A $\tau$-structure $\calA$ and $k\in \mathbb N$}{Is there a solution for $\varphi$ of cardinality $k$}
As before, for a class $\Fragment\subseteq\FO $ of formulas, we denote by $\WD\text-\Fragment$ the class of all problems $\WD_\varphi$ such that $\varphi\in \Fragment$.

\begin{example}\label{Ex:fd}
	The problem $\Clique$ is defined as follows. Given a graph $\calG\dfn (V,E)$ and $k\in \mathbb N$. Is there a set $S\subseteq V$ such that $|S|=k$ and $(u,v)\in E$ for every $x,y\in S$?
	Then $\Clique$ is $\WD_{\varphi_c}$, where 
	\[
	\varphi_c(X) \dfn \forall x\forall y \bigl((X(x)\land X(y)\land x\not=y)\rightarrow E(xy)\bigr).  
	\]
	Consequently, $\Clique$ is in $\WD\text-\PiForm 1$.
	
	Moreover, Let $\DomSet$ be the problem to determine if a graph $\calG$ contains a set $S\subseteq V$ such that $|S|=k$ and every vertex in $V\setminus S$ is incident to some vertex in $S$?
	Then $\DomSet$ is in $\WD\text-\PiForm 2$ since the problem is $\WD_{\varphi_d}$, where
	\[
		\varphi_d(X) \dfn \forall x\exists y \bigl(X(y)\land (E(x,y) \lor x=y)\bigr).  
	\]
\end{example}

\subparagraph*{Parameterized Complexity Theory}
A \emph{parameterized problem} (PP) $P\subseteq\Sigma^*\times\mathbb N$ is a subset of the crossproduct of an alphabet and the natural numbers.
For an \emph{instance} $(x,k)\in\Sigma^*\times\mathbb N$, $k$ is called the (value of the) \emph{parameter}.
A \emph{parameterization} is a polynomial-time computable function that maps a value from $x\in\Sigma^*$ to its corresponding $k\in\mathbb N$.
The problem $P$ is said to be \emph{fixed-parameter tractable} (or in the class $\FPT$) if there exists a deterministic algorithm $\mathcal A$ and a computable function $f$ such that for all $(x,k)\in\Sigma^*\times \mathbb N$, algorithm $\mathcal A$ correctly decides the membership of $(x,k)\in P$ and runs in time $f(k)\cdot|x|^{O(1)}$.
The problem $P$ belongs to the class $\XP$ if $\mathcal A$ runs in time $|x|^{f(k)}$ on a deterministic machine. 
Abusing a little bit of notation, we write $\mathcal C$-machine for the type of machines that decide languages in the class $\mathcal C$, and we will say a function $f$ is \emph{$\mathcal C$-computable} if it can be computed by a machine on which the resource bounds of the class $\mathcal C$ are imposed.
The class $\para\NP$ includes problems decidable by a nondeterministic algorithm $\mathcal A$ which runs in time $f(k)\cdot|x|^{O(1)}$ for some computable function $f$. 
One can define a parameterized complexity class $\para\mathcal C$ corresponding to a complexity class $\mathcal C$ via a \emph{precomputation on the parameter}.
\begin{definition}
	Let $\mathcal C$ be any complexity class.
	Then $\para\mathcal C$ is the class of all PPs $P\subseteq\Sigma^*\times\mathbb N$ such that there exists a computable function $\pi\colon\mathbb N\to\Delta^*$ and a language $L\in\mathcal C$ with $L\subseteq\Sigma^*\times\Delta^*$ such that for all $(x,k)\in\Sigma^*\times\mathbb N$ we have that $(x,k)\in P \Leftrightarrow (x,\pi(k))\in L$.
\end{definition}
Notice that $\para\complClFont{P}=\FPT$ and the two definitions of $\para\NP$ are equivalent.

A problem $P$ is in the complexity class $\WP$, if it can be decided by a NTM running in time $f(k)\cdot|x|^{O(1)}$ steps, with at most $g(k)$-many non-deterministic steps, where $f,g$ are computable functions.
Moreover, $\WP$ is contained in the intersection of $\para\NP$ and $\XP$ (for details see the textbook of Flum and Grohe~\cite{DBLP:series/txtcs/FlumG06}). 

Let $c\in\mathbb N$ and $P\subseteq\Sigma^*\times\mathbb N$ be a PP, then the \emph{$c$-slice of $P$}, written as $P_c$ is defined as $P_c\coloneqq\{\,(x,k)\in\Sigma^*\times\mathbb N\mid k=c\,\}$.
Notice that $P_c$ is a classical problem then.

\begin{definition}\label{def:fpt-reduction}
	Let $P\subseteq\Sigma^*\times\mathbb N,Q\subseteq\Gamma^*$ be two PPs.
	One says that $P$ is \emph{fpt-reducible} to $Q$, $P\fptreduction Q$, if there exists an $\FPT$-computable function $f\colon\Sigma^*\times\mathbb N\to\Gamma^*\times\mathbb N$ such that
	\begin{itemize}
		\item for all $(x,k)\in\Sigma^*\times\mathbb N$ we have that $(x,k)\in P\Leftrightarrow f(x,k)\in Q$,
		\item there exists a computable function $g\colon\mathbb N\to\mathbb N$ such that for all $(x,k)\in\Sigma^*\times\mathbb N$ and $f(x,k)=(x',k')$ we have that $k'\leq g(k)$.
	\end{itemize}
\end{definition}
Finally, in order to show that a problem $P$ is $\para\mathcal C$-hard (for some complexity class $\mathcal C$) it is sufficient to prove that for some $c\in \mathbb N$, the slice $P_c$ is $\mathcal C$-hard in the classical setting.

To define the complexity classes in $\W{}$-hierarchy, the parameterized version of the problem $\WD_\varphi$ is now defined as follows.
\paraproblemdef{$\p\WD_\varphi$ --- parameterized weighted Fagin definability for fixed $\varphi\in\FO$}{A $\tau$-structure $\calA$ and $k\in \mathbb N$}{$k$}{Is there a solution for $\varphi$ of cardinality $k$}

The complexity classes of the $\W{}$-hierarchy are characterized via the following definition.
\begin{definition}[{\cite[Def. 5.1]{DBLP:series/txtcs/FlumG06}}]\label{def:wt}
	For every $t\geq 1$, we let $\W t\dfn [\p\WD\text-\PiForm t]^{\FPT}$. 
	The class $\W t$ forms the $t$-th level of the $\W{}$-hierarchy.
\end{definition} 

Alternatively, the $\W{}$-hierarchy can be defined via the weighted satisfiability problem for propositional formulas. 
%
Let $I$ be a non-empty index set and $d\in\mathbb N$. 
Consider the following special subclasses of propositional formulas:
\[\begin{array}{@{}r@{\,}c@{\,}l@{}}
	\Gamma_{0, d} & =&  \{\ell_1\land\dots\land \ell_c \mid \ell_1,\dots, \ell_c \text{ are literals and }c\leq d \},\\
	\Delta_{0, d} & = & \{\ell_1\lor\dots\lor \ell_c \mid \ell_1,\dots, \ell_c \text{ are literals and }c\leq d \},\\
	\Gamma_{t, d} & = & \left\{\,\bigwedge\limits_{i\in I} \alpha_i \,\middle|\, \alpha_i \in \Delta_{t-1,d}  \text{ for } i \in I\, \right\}, \\
	\Delta_{t, d} &= & \left\{\,\bigvee\limits_{i\in I} \alpha_i \,\middle|\, \alpha_i \in \Gamma_{t-1,d}  \text{ for } i \in I\, \right\}.
\end{array}\]
Finally, $\Gamma^{+}_{t,d}$ (resp. $\Gamma^{-}_{t,d}$) denote the class of all positive (negative) formulas in $\Gamma_{t,d}$.

The parameterized weighted satisfiability problem ($\WSAT$) for propositional formulas is defined as below.
\paraproblemdef{$\p\wsat{t,d}$ --- parameterized weighted satisfiability}{a $\Gamma_{t,d}$-formula $\alpha$ with $t, d \geq 1 $ and $k\in \mathbb{N}$}{$k$}{is there a satisfying assignment for $\alpha$ of weight $k$}

The classes of the $\complClFont{W}$-hierarchy are defined equivalently in terms of these problems.
\begin{proposition}[{\cite[Thm.~7.1]{DBLP:series/txtcs/FlumG06}}]\label{theorem-wsat}
	For every $t\geq 1$ the following problems are $\W t$-complete under fpt-reductions.
	\begin{itemize}
		\item $\p\wsatpos{t,1}{}$ if $t$ is even and  $\p\wsatneg{t,1}{}$ if $t$ is odd.
		\item $\p\wsat{t,d}{}$ for every $t,d\geq 1$.
	\end{itemize}
\end{proposition}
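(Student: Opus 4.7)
The plan is to prove the two bullets by establishing fpt-interreducibility between $\p\wsat{t,d}$ and $\p\WD\text-\PiForm{t}$ (the defining problem of $\W{t}$ from Definition~\ref{def:wt}), and then separately collapsing the clause-width $d$ down to $1$ and unifying the polarity of literals according to the parity of $t$. Concretely the proof splits into three steps: (i) a reduction $\p\WD\text-\PiForm{t}\fptreduction\p\wsat{t,d}$, (ii) a reduction $\p\wsat{t,d}\fptreduction\p\WD\text-\PiForm{t}$, and (iii) a width/polarity normalization from $\Gamma_{t,d}$ down to $\Gamma^{+}_{t,1}$ for even $t$ and $\Gamma^{-}_{t,1}$ for odd $t$.

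For step (i), take an instance $(\calA,k)$ with $\varphi(X)\in\PiForm{t}$ and introduce one propositional variable $Y_a$ for each $a$ in the universe $A$, with intended meaning $a\in X$. Unfolding every quantifier of $\varphi$ over the $|A|$ possible witnesses produces a nesting of $t$ alternating $\bigwedge/\bigvee$ blocks whose leaves are either ground atoms of $\calA$ (evaluable as $\top/\bot$ and then eliminable) or occurrences of the $Y_a$. Grouping the leaves inside each innermost quantifier-free clause of $\varphi$, whose arity depends only on $\varphi$, yields a $\Gamma_{t,d}$-formula for some $d=d(\varphi)$, and a weight-$k$ assignment to the $Y_a$ corresponds exactly to a solution of cardinality $k$.

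Step (ii) reverses this encoding. Given a $\Gamma_{t,d}$-formula $\alpha$ with variable set $V$, I would build a $\tau$-structure whose universe is $V$ together with predicates describing the subformula tree: which node is a $\bigwedge$ or a $\bigvee$, its level, and which literals it carries. A $\PiForm{t}$-formula $\varphi(X)$ then says that $X$ is a set of variables satisfying every inner block, which can be written with $t$ alternating quantifier blocks followed by a quantifier-free matrix of arity bounded by $d$. A weight-$k$ satisfying assignment of $\alpha$ corresponds bijectively to a size-$k$ solution of $\varphi$ on this structure.

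Step (iii) is where I expect the principal obstacle. Collapsing $d$ to $1$ is done clause by clause: an innermost clause of width at most $d$ is replaced by a single fresh variable $z$ together with defining constraints injected one quantifier level higher; careful placement keeps the overall alternation at $t$ and inflates the target weight only by a computable function of $k$, as required by Definition~\ref{def:fpt-reduction}. Uniformising the polarity proceeds by replacing each negative literal $\neg x$ by a fresh positive variable $x'$, guarded so that exactly one of $\{x,x'\}$ is selected; because De Morgan's laws shift negations through alternating $\bigwedge/\bigvee$ levels, the sign surviving at the outermost level flips once per level, which is precisely why $t$ even forces $\p\wsatpos{t,1}{}$ and $t$ odd forces $\p\wsatneg{t,1}{}$. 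The genuinely delicate point is the weight-preservation bookkeeping: every auxiliary variable must contribute a predictable, bounded amount of weight so that the target weight $k'$ stays within a computable function of $k$, making the transformations legitimate fpt-reductions rather than merely polynomial-time many-one reductions.
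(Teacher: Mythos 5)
First, note that the paper does not prove this proposition at all: it is imported verbatim from Flum and Grohe (Theorem~7.1 of \cite{DBLP:series/txtcs/FlumG06}), so there is no in-paper argument to compare yours against. Judged on its own merits, your steps (i) and (ii) are the standard interreducibility between $\p\WD\text-\PiForm{t}$ and $\p\wsat{t,d}$ (propositionalize the quantifiers over the universe in one direction; encode the syntax tree as a structure and descend through it with $t$ alternating quantifier blocks in the other), and these suffice for the second bullet, i.e., $\W t$-completeness of $\p\wsat{t,d}$ for all $d\geq 1$.

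The genuine gap is step (iii), which is precisely where the difficulty of the theorem sits. Your polarity normalization replaces $\neg x$ by a fresh variable $x'$ ``guarded so that exactly one of $\{x,x'\}$ is selected.'' That guard cannot live inside a positive formula: ``at least one of $x,x'$'' is the positive clause $x\lor x'$, but ``at most one'' is $\neg x\lor\neg x'$, which destroys positivity (dually for the antimonotone case). Worse, even if the guard could be enforced, the weight bookkeeping fails fundamentally rather than delicately: under the complementary-variable encoding, a weight-$k$ assignment of the original $n$-variable formula corresponds to an assignment of weight $k+(n-k)=n$ in the new formula, and $n$ is not bounded by any computable function of $k$, so the map violates the second condition of Definition~\ref{def:fpt-reduction} and is not an fpt-reduction. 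The monotone collapse (even $t$) and antimonotone collapse (odd $t$) are genuinely hard theorems in Flum and Grohe's development (a chain of lemmas preceding their Theorem~7.1), and the parity dependence arises from the structure of the innermost $\Gamma_{0,d}$ versus $\Delta_{0,d}$ blocks, not from De Morgan sign-flipping. As written, the first bullet of the proposition is not established.
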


\begin{figure}
	\centering
		\resizebox{\linewidth}{!}{\begin{tikzpicture}[every node/.style={scale=0.8,opacity=1,font=\fontsize{15}{25}\selectfont},fill=gray, fill opacity=0.1]

\fill[fill=white,fill opacity=1] (-5.7,0) parabola bend (0,10) (5.7,0);
\draw[very thick] (7,0) -- (-7,0);

\fill (-1.7,0) parabola bend (0,3.5) (1.7,0);
\fill (-2.2,0) parabola bend (0,4.5) (2.2,0);
\fill (-2.7,0) parabola bend (0,5.5) (2.7,0);
\fill (-3.2,0) parabola bend (0,6.5) (3.2,0);
\fill (-3.7,0) parabola bend (0,7.5) (3.7,0);
\fill (-5.7,0) parabola bend (0,10) (5.7,0);

\draw (-1.7,0) parabola bend (0,3.5) (1.7,0);
\draw (-2.2,0) parabola bend (0,4.5) (2.2,0);
\draw (-2.7,0) parabola bend (0,5.5) (2.7,0);
\draw (-3.2,0) parabola bend (0,6.5) (3.2,0);
\draw (-3.7,0) parabola bend (0,7.5) (3.7,0);
\draw[thick] (-5.7,0) parabola bend (0,10) (5.7,0);

\node at (0,1.7) {$\FPT$};
\node at (0,3.85) {$\W{1}$};
\node at (0,4.9)  {$\W{2}$};
\node at (0,6.15) {$\vdots$};
\node at (0,7)    {$\W\Ptime$};
\node at (0,9)    {$\para\NP$};
 
\node[anchor=west] at (5.5,1.5)  (fptr){p-\scshape VertexCover};

\node[anchor=west] at (5.5,3.85) (w1r){p-\scshape Clique};
\node[anchor=west] at (5.5,3.35) {p-\scshape IndependentSet};

\node[anchor=west] at (5.5,4.9) (w2r){p-\scshape DominatingSet};

\node[anchor=west] at (5.5,9) (pnpr){p-\scshape Colouring};

\draw[thick,o-]  (.5,1.5) -- (fptr) ;
\draw[thick,o-]  (.5,3.85) -- (w1r) ;
\draw[thick,o-]  (.5,4.9) -- (w2r) ;
\draw[thick,o-]  (1,9) -- (pnpr) ;

\node[anchor=west] at (-9,1.5)  (fptl){p-WSAT$(\Gamma_{1,d}^{+})$};

\node[anchor=west] at (-9,4.92) (w2l){ p-WSAT$(\Gamma_{t,d})$};

\node[anchor=west] at (-9,7) (wpl){ p-WSAT(CIRC)};


\draw[thick,o-]  (-.5,1.5) -- (fptl) ;
\draw[thick,-]  (-4.6,4.92) -- (w2l) ;
\draw[thick,o-]  (-.5,7) -- (wpl) ;

\draw[thick,decoration={brace, mirror, amplitude=4pt},decorate,thick] (-4.4,6) -- node[rotate=90,left=8pt,anchor=south] {} (-4.4,3.85);
\draw[thick,-o,dashed]  (-4.3,6) -- (-.5,6) ;
\draw[thick,-o,dashed] (-4.3,3.85) -- (-.5,3.85) ;
\end{tikzpicture}}
\caption{Landscape of relevant parameterized complexity classes with complete problems. The definition of several of these complete problems are mentioned in the relevant proofs.}\label{fig:pc-landscape}
\end{figure}
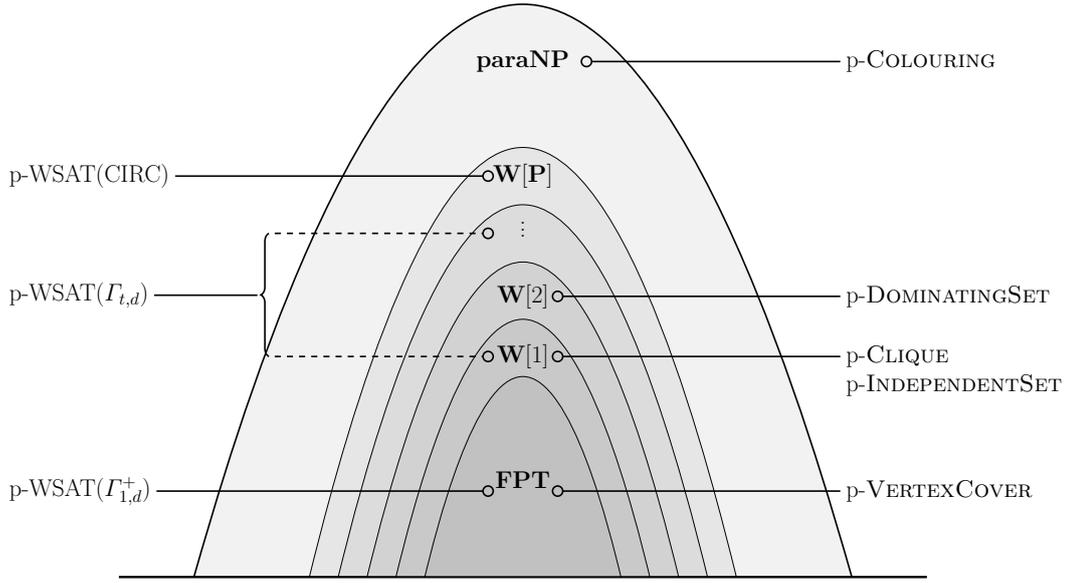
Figure~\ref{fig:pc-landscape} draws the complexity landscape with complete problems in parameterized complexity that are relevant.

\subparagraph*{Team-based Logics}
We assume basic familiarity with predicate logic~\cite{DBLP:books/daglib/0082516}. 
We consider first-order vocabularies $\tau$ that are sets of \emph{function} symbols and \emph{relation} symbols with an equality symbol $=$. 
Let $\VAR$ be a countably infinite set of \emph{first-order variables}.
Terms over $\tau$ are defined in the usual way, and the set of well-formed formulas of first-order logic ($\FO$) is defined by the following EBNF:
\[
	\psi \Coloneqq
	t_1 =t_2\mid 
	R(t_1,\dots,t_k)\mid
	\lnot R(t_1,\dots,t_k)\mid
	\psi\land\psi\mid
	\psi\lor\psi\mid
	\exists x\psi\mid
	\forall x\psi,
\]
where $t_i$ are terms $1\leq i\leq k$, $R$ is a $k$-ary relation symbol from $\sigma$, $k\in\mathbb N$, and $x\in\VAR$.
If $\psi$ is a formula, then we use $\VAR(\psi)$ for its set of variables, and $\Fr(\psi)$ for its set of \emph{free variables}.
We evaluate $\FO$-formulas in $\tau$-structures, which are pairs of the form $\calA=(A,\tau^\calA)$, where $A$ is the \emph{domain} of $\calA$ (when clear from the context, we write $A$ instead of $\dom(\calA)$), and $\tau^\calA$ interprets the function and relational symbols in the usual way (e.g., $t^\calA\langle s\rangle=s(x)$ if $t=x\in\VAR$).
If $\mathbf t=(t_1,\dots,t_n)$ is a tuple of terms for $n\in\mathbb N$, then we write $\mathbf t^\calA\langle s\rangle$ for $(t_1^\calA\langle s\rangle, \dots, t_n^\calA\langle s\rangle)$.

Dependence logic $\D$ extends $\FO$ by dependence atoms of the form $\depa{t}{u}$ where $\mathbf t$ and $\mathbf u$ are tuples of terms.
Inclusion logic $\INC$ in obtained by adding to $\FO$ the inclusion atoms of the form $\inca{t}{u}$ for tuples $\mathbf t$ and $ \mathbf u$ of terms.
Finally, independence logic $\IND$ extends $\FO$ by independence atoms of the form $\indepa{t}{u}{v}$ for tuples $\mathbf t, \mathbf u$ and $\mathbf v$ of terms.
We call expressions of the kind $t_1=t_2, R(\mathbf t),\depa{t}{u},\inca{t}{u}$ and $\indepa{t}{u}{v}$ \emph{atomic formulas}.

The semantics is defined through the concept of a team.
Let $\calA$ be a structure and $X\subseteq\VAR$, then an \emph{assignment} $s$ is a mapping $s\colon X\rightarrow A$. 
\begin{definition}
Let $X\subseteq\VAR$. A \emph{team $T$ in $\calA$ with domain $X$} is a set of assignments $s\colon X\to A$.	
\end{definition}
For a team $T$ with domain $X\supseteq Y$ define its \emph{restriction} to $Y$ as $T{\upharpoonright} Y\coloneqq\{\,s{\upharpoonright} Y \mid s\in T\,\}$.
If $s\colon X\to A$ is an assignment and $x\in\VAR$ is a variable, then $s^x_a\colon X\cup\{x\}\to A$ is the assignment that maps $x$ to $a$ and $y\in X\setminus\{x\}$ to $s(y)$. 
Let $T$ be a team in $\calA$ with domain $X$. 
Then any function $f\colon T\to \mathcal{P}(A)\setminus\{\emptyset\}$ can be used as a \emph{supplementing function} of $T$ to extend or modify $T$ to the \emph{supplemented team} $T^x_f\coloneqq\{\,s^x_a\mid s\in T,a\in f(s)\,\}$. 
For the case $f(s)=A$ is the constant function we simply write $T^x_\calA$ for $T^x_f$. 
The semantics of formulas is defined as follows.
\begin{definition}\label{def-semantics}
	Let $\tau$ be a vocabulary, $\calA$ be a $\tau$-structure and $T$ be a team over $\calA$ with domain $X\subseteq\VAR$. Then,
\begin{alignat*}{3}
	& \calA,T\models t_1=t_2 && \;\text{ iff }\; && \forall s\in T: t_1^\calA\langle s\rangle=t_2^\calA\langle s\rangle,\\
	& \calA,T\models t_1\neq t_2 && \;\text{ iff }\; && \forall s\in T: t_1^\calA\langle s\rangle\neq t_2^\calA\langle s\rangle,\\
	& \calA,T\models R(t_1,\dots,t_n) && \;\text{ iff }\;  && \forall s\in T: (t_1^\calA\langle s\rangle,\dots,t_n^\calA\langle s\rangle)\in R^{\calA},\\
	& \calA,T\models \neg R(t_1,\dots,t_n) && \;\text{ iff }\;  && \forall s\in T:  (t_1^\calA\langle s\rangle,\dots,t_n^\calA\langle s\rangle)\not\in R^{\calA},\\
	& \calA,T\models \depa{t}{u} && \;\text{ iff }\;  && \forall s_1, s_2\in T: \mathbf t^\calA\langle s_1\rangle=\mathbf t^\calA\langle s_2\rangle \implies  \mathbf u^\calA\langle s_1\rangle=\mathbf u^\calA\langle s_2\rangle ,\\		
	& \calA,T\models \inca{t}{u} && \;\text{ iff }\;  && \forall s_1\in T, \exists s_2\in T: \mathbf t^\calA\langle s_1\rangle=\mathbf u^\calA\langle s_2\rangle, \\
	& \calA,T\models \indepa{t}{u}{v} && \;\text{ iff }\;  && \forall s_1, s_2\in T: \mathbf v^\calA\langle s_1\rangle=\mathbf v^\calA\langle s_2\rangle \text{ then } \exists s_3\in T: \\
	& &&   && \mathbf {vt}^\calA\langle s_3\rangle=\mathbf {vt}^\calA\langle s_1\rangle \text{ and } \mathbf {u}^\calA\langle s_3\rangle=\mathbf {u}^\calA\langle s_2\rangle,\\		
	& \calA,T\models \varphi_0\land \varphi_1  && \;\text{ iff }\;  && \calA,T\models \varphi_0 \text{ and } \calA,T\models \varphi_1, \\
	& \calA,T\models \varphi_0\lor \varphi_1  && \;\text{ iff }\;  && \exists T_0\exists T_1: T_0\cup T_1=T  \text{ and } \calA,T_i\models \varphi_i  \, \text{ for }i =0,1,\\
	& \calA,T\models\exists x\varphi && \;\text{ iff }\;  && \calA,T^x_f\models\varphi\text{ for some }f\colon T\to \mathcal{P}(A)\setminus\{\emptyset\},\\
	& \calA,T\models\forall x\varphi && \;\text{ iff }\;  && \calA,T^x_\calA\models\varphi.
\end{alignat*}
\end{definition}

For a structure $\calA$ and a team $T$ over $X$ in $\calA$, we let $\rel(T)$ denote the relation \emph{defined} by $T$.
That is, $\rel(T)\dfn \{\,\tuple{a}\mid s(\tuple{x})=\tuple{a},s\in T \,\}$.
Moreover, we say that a formula $\varphi$ is \emph{flat} if for any team $T$ over $\Fr(\varphi)$ we have that $\calA,T\models \varphi$ if and only if $\calA,\{s\}\models \varphi$ for every $s\in T$.
The $\FO$-formulas satisfy this flatness property.
Notice that, for $\FO$-formulas, by singleton equivalence, team semantics and classical Tarski semantics coincide, i.e., $\calA,\{s\}\models \varphi$ if and only if $\calA\models_s \varphi$.
Furthermore, note that $\calA,T\models \varphi$ for all $\varphi$ when $T=\emptyset$ (this is also called the \emph{empty team property}).
Finally, $\calC$-formulas for every $\calC\in \{\D,\INC,\IND\}$ are \emph{local}, that is, for a team $T$ in $\calA$ over domain $X$ and a $\D$-formula $\varphi$, we have that $\calA,T\models\varphi$ if and only if $\calA,T{\upharpoonright}{\Fr(\varphi)}\models\varphi$.

We now extend the formulas classes ($\SigForm{t}$ and $\PiForm{t}$) to the logics under consideration. 
To this end, $\DPi{t}\subseteq\D$ (resp., $\DSigma{t}$) denotes the collection of formulas $\varphi$ of the form $\varphi\dfn Q_1x_1Q_2x_2\dots Q_tx_t\psi$ such that $\psi$ is a quantifier free $\D$-formula, $Q_i\in \{\forall,\exists\}$ and $Q_1=\forall$ ($Q_1=\exists$). 
In other words, $\varphi$ is a $\D$-formula that starts with a $\forall$-quantifier (resp., $\exists$) and has $t$-alternations of quantifiers.
The classes $\INCPi{t}\subseteq\INC$ (resp., $\INCSigma{t}$) for $\INC$ and $\INDPi{t}\subseteq\D$ (resp., $\INDSigma{t}$) for $\IND$ are similarly defined.

\subparagraph{Weighted Team Definability}
Now we introduce a novel version of the weighted definability problem for formulas in team-based logics.
Let $\calC\in \{\D,\INC,\IND\}$, $\varphi$ be a fixed $\calC$-formula over free variables $\Vars(\varphi)$ and $k\in \mathbb N$. 
Then given a structure $\calA$, the \emph{weighted-team} definable problem $\WTD_\varphi$ asks if there is a team of size $k$ for $\varphi$ over $\Vars(\varphi)$ in $\calA$.
\problemdef{$\WTD_\varphi$ --- weighted team definability for fixed $\varphi$}{A $\tau$-structure $\calA$ and $k\in \mathbb N$}{Is there a team $T$ over $\Vars(\varphi)$ such that $|T|=k$ and $\calA,T\models \varphi$}
Then the analogous parameterized version of $\WTD_\varphi$ is defined as follows.
\paraproblemdef{$\p \WTD_\varphi$ --- parameterized weighted team definability for fixed $\varphi$}{A $\tau$-structure $\calA$ and $k\in \mathbb N$}{$k$}{Is there a team $T$ over $\Vars(\varphi)$ such that $|T|=k$ and $\calA,T\models \varphi$}

Note that  the problem  $\WTD_\varphi$ references the set of free variables $\Vars(\varphi)$ of the formula $\varphi$. 
As a consequence, our parameterization is trivial for sentences since there are only two teams $\emptyset$ and $\{\emptyset\}$ with the empty team domain.
As before, for a set $\Fragment\subseteq \calC$ of formulas, we denote by $\WTD\text-\Fragment$ the class of problems $\WTD_\varphi$ such that $\varphi\in \Fragment$.

\section{Complexity Results for Weighted Team Definability}
\subsection{First-Order Formulas}

We begin our study of the complexity for $\p \WTD_\varphi$ in the case $\varphi $ is a pure $\FO$-formula under team semantics.
Notice that the consequence of disallowing free relation variables in $\varphi$ is that $\p\WTD_\varphi$ is different than the weighted Fagin definability $\p\WD_\varphi$.
The following theorem establishes that the two problems are also different from the classical complexity theoretic point of view. 
Here, we assume basic familiarity about the circuit complexity classes $\mathrm{TC}^0$ and $\mathrm{AC}^0$ (for an introduction into this area, see the textbook of Vollmer~\cite{DBLP:books/daglib/0097931}).

\begin{theorem}\label{thm:WT-FO}
For any $\FO$-formula  $\varphi$ the problem $\WTD_\varphi$ is in DLOGTIME-uniform $\mathrm{TC}^0$.
\end{theorem}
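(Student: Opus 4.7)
The plan is to reduce $\WTD_\varphi$ to a counting question and then realise that counting by a suitable $\mathrm{TC}^0$ circuit. The crucial observation is that $\FO$-formulas are flat under team semantics, and hence, combined with singleton equivalence, one has
\[
\calA, T \models \varphi \iff \calA \models_{s} \varphi \text{ for every } s \in T,
\]
where $\models_s$ denotes classical Tarski satisfaction. Together with the empty team property, this makes the existence of a team of size exactly $k$ over $\Vars(\varphi)$ equivalent to $|S_\varphi^\calA| \geq k$, where
\[
S_\varphi^\calA \;:=\; \{\, s \colon \Vars(\varphi) \to A \mid \calA \models_{s} \varphi \,\},
\]
because any $k$-element subset of $S_\varphi^\calA$ is then automatically a valid team (and for $k=0$ the empty team works).

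With this reduction in hand, the second step is to turn the check ``$|S_\varphi^\calA|\geq k$'' into a DLOGTIME-uniform $\mathrm{TC}^0$ circuit. Since $\varphi$ is fixed, the arity $r := |\Vars(\varphi)|$ is a constant, and hence $S_\varphi^\calA$ is a subset of an enumerable polynomial-size domain $A^r$. For each of the $|A|^r$ assignments $s$, evaluating $\calA \models_s \varphi$ is an instance of FO model-checking for the fixed formula $\varphi$, which is well-known to be in DLOGTIME-uniform $\mathrm{AC}^0$ (Immerman). Evaluating all of these in parallel produces a bit vector of polynomial length whose sum equals $|S_\varphi^\calA|$.

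The final step is to compare this sum to the binary input number $k$. Both addition of polynomially many bits and the threshold comparison against a binary input lie in DLOGTIME-uniform $\mathrm{TC}^0$, so the entire procedure can be expressed as a single DLOGTIME-uniform $\mathrm{TC}^0$ circuit. The only subtlety I expect is bookkeeping around uniformity: one must verify that the indexing of the $|A|^r$ candidate assignments and the plug-in of $s$ into the fixed $\mathrm{AC}^0$ model-checking circuit are themselves DLOGTIME-uniform; this is standard. The conceptual heart of the argument is really just the flatness collapse of team semantics to classical satisfaction, which trivialises the combinatorial content of $\WTD_\varphi$ for $\FO$ and separates it sharply from the harder team-logic fragments treated in the remainder of the paper.
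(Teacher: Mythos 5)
Your proposal is correct and follows essentially the same route as the paper's proof: exploit flatness to reduce $\WTD_\varphi$ to counting the satisfying assignments of the fixed formula $\varphi$ over the polynomially many candidate assignments (each checkable in $\mathrm{AC}^0$), and then perform the threshold comparison against $k$ in DLOGTIME-uniform $\mathrm{TC}^0$. Your additional remarks on the ``exactly $k$ versus at least $k$'' reduction and on uniformity bookkeeping only make explicit what the paper leaves implicit.
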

\begin{proof} 
The proof uses the flatness property of $\FO$-formulas under team semantics: 
\[
\calA,T\models \varphi \Leftrightarrow \forall  s \in T: \ \calA\models_s \varphi.
\]
It is well know that $\calA\models_s \varphi$ can be decided by $\mathrm{AC}^0$-circuits, whence the original question reduces to counting the number $t$ of satisfying assignments of $\varphi$ and checking whether $t\ge k$. 
This can be easily simulated by DLOGTIME-uniform  $\mathrm{TC}^0$ circuits as we can hardcode all possible assignments into the circuit. 
Here, notice that $\varphi$ is fixed and thereby the number of free variables are fixed to some constant $c\in\mathbb N$.
Then, the input is the structure $\mathcal A$ of size $n$ yielding $O(n^c)$ many assignments.
\end{proof}

\subsection{Inclusion Logic}
In this section, we relate the $\W{}$-hierachy and $\W\Ptime$ to weighted team definability for inclusion logic formulas. 
	First observe that if $\varphi$ is an $\INC$-sentence, then the problem $\p \WTD_\varphi$ is in $\FPT$.
	This is due to the reason that the data complexity of fixed $\INC$-sentences is in $\Ptime$~\cite{gallhella13}.

	\begin{theorem}\label{inc:sentences}
		Let $\varphi$ be an $\INC$-sentence, then $\p\WTD_{\varphi}$ is in $\FPT$.
	\end{theorem}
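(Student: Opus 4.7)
The plan is to exploit the fact that, when $\varphi$ is a sentence, the set of free variables $\Vars(\varphi)$ is empty, so the domain of any team $T$ witnessing $\p\WTD_{\varphi}$ must also be empty. There are only two teams with empty domain over any structure $\calA$, namely $\emptyset$ and $\{\emptyset\}$ (the team whose single assignment is the empty one). Consequently, the only possible team sizes are $0$ and $1$, and so the parameter $k$ cannot be meaningful beyond this binary distinction.

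With this observation, the algorithm is a simple case analysis. First, I would handle $k=0$: by the empty team property, $\calA,\emptyset\models\varphi$ always holds, so the answer is \emph{yes}. Next, for $k\geq 2$, no admissible team of size $k$ exists, so the answer is \emph{no}. The only nontrivial case is $k=1$, where the unique candidate team is $\{\emptyset\}$, and $\calA,\{\emptyset\}\models\varphi$ is precisely the statement that $\calA\models\varphi$ in the standard sense of satisfaction of an $\INC$-sentence.

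Finally, I would invoke the known result from \cite{gallhella13} (cited in the introduction) that the data complexity of inclusion logic sentences lies in $\Ptime$: for any fixed $\INC$-sentence $\varphi$, the question $\calA\models\varphi$ is decidable in polynomial time in $\strucsize$. Thus the overall algorithm runs in time $|\calA|^{O(1)}$ with no dependence on $k$ beyond the cheap case distinction, which clearly places $\p\WTD_{\varphi}$ in $\FPT$ (in fact, it is already in $\Ptime$). There is no real obstacle here; the statement is essentially an immediate corollary of the triviality of the team-domain for sentences combined with the polynomial-time data complexity of $\INC$.
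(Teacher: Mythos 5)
Your proposal is correct and follows essentially the same route as the paper: both reduce the problem to the observation that a sentence admits only the teams $\emptyset$ and $\{\emptyset\}$, so the only interesting case is $k=1$, which is the standard model-checking question, and then invoke the polynomial-time data complexity of $\INC$-sentences from \cite{gallhella13}. Your version merely spells out the $k=0$ and $k\geq 2$ cases, which the paper leaves implicit.
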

 	\begin{proof}
 	Recall that an $\INC$-sentence $\varphi$ has a satisfying team $T$ in $\calA$ if and only if $\calA,\{\emptyset\}\models \varphi$.
 	Then $\varphi$ is true in $\calA$ if and only if there is a team $T$ such that $|T|=1$ and $\calA,T\models \varphi$.
 	\end{proof}

Now we prove, that $\p\WTD_\varphi$ can already be $\W1$-hard when $\varphi$ is a quantifier-free $\FO(\subseteq)$-formula with free variables.

\begin{theorem}\label{thm:WT-INC-W1}
 There is a quantifier-free $\FO(\subseteq)$-formula $\varphi$ such that the problem $\p\WTD_\varphi$ is $\W1$-hard and in $\W2$. 
\end{theorem}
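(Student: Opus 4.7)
I plan to prove the two bounds separately.

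\textbf{Membership in $\W2$.} The plan is to exhibit an fpt-reduction $\p\WTD_\varphi \fptreduction \p\WD_\psi$ for a suitable $\psi\in\PiForm 2$, which, by Definition~\ref{def:wt}, places $\p\WTD_\varphi$ in $\W2$. Starting from the quantifier-free $\INC$-formula $\varphi(\tuple x)$, I would translate it by induction on its structure to a first-order formula $\psi(X)$ whose only free relation symbol $X$, of arity $|\tuple x|$, plays the role of $\rel(T)$: an $\FO$-atom $R(\tuple t)$ becomes the $\PiForm 1$-statement $\forall\tuple a\,(X(\tuple a)\to R(\tuple t[\tuple a]))$; an inclusion atom $\tuple t\subseteq\tuple u$ becomes the $\PiForm 2$-statement $\forall\tuple a\,(X(\tuple a)\to\exists\tuple b\,(X(\tuple b)\land\tuple t[\tuple a]=\tuple u[\tuple b]))$; and conjunction is closed under $\PiForm 2$ after renaming and quantifier shifting. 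The only delicate case is disjunction, which in team semantics asks for an existential split $T=T_{0}\cup T_{1}$; I would handle it by adding one fresh binary label variable per syntactic disjunction of $\varphi$ and replacing $T$ by an extended team over the enriched variable tuple whose slices $b=0$ and $b=1$ reproduce the two halves of the split. Because $\varphi$ is fixed, the number of added labels is a constant $c=c(\varphi)$ and the team-size blow-up is bounded by $2^{c}\cdot k$, which gives a bona fide fpt-reduction.

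\textbf{$\W1$-hardness.} I plan an fpt-reduction from the $\W1$-complete problem $\p\Clique$. Given $(\calG,k)$ with $\calG=(V,E)$, let $\calA_\calG\dfn(V,E)$, set $k'\dfn k^{2}$, and take the formula
\[
  \varphi(x,y)\dfn (x=y\lor E(x,y))\;\land\;x\subseteq y\;\land\;y\subseteq x.
\]
The forward direction is immediate: from a $k$-clique $S\subseteq V$ the team $T\dfn S\times S$ has $|T|=k^{2}$ and $T(x)=T(y)=S$, so both inclusion atoms hold; the disjunction is witnessed by splitting $T$ into the diagonal $\{(u,u):u\in S\}$ (satisfying $x=y$) and the off-diagonal pairs, which lie in $E$ since $S$ is a clique. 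Conversely, any team $T$ with $|T|=k^{2}$ and $\calA_\calG,T\models\varphi$ has $T(x)=T(y)=S$ for some $S\subseteq V$ with $|S|\ge k$ (from $|T|\le|S|^{2}$), and every $(u,v)\in T$ is either a diagonal element or an edge of $\calG$.

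The main obstacle will be ruling out ``fat'' teams in which $|S|$ strictly exceeds $k$ and $T\subsetneq S\times S$ without $\calG$ containing any $k$-clique (small examples such as $\calG=C_{4}$ with $k=3$ show that the bare reduction admits such false positives). I plan to close the gap by reducing instead from the equally $\W1$-complete parameterized Multi-Coloured Clique problem: I would enrich $\calA_\calG$ with a binary relation $C$ encoding the $k$-colouring of $V$, strengthen $\varphi$ by the conjunct $\neg C(x,y)\lor x=y$ (forcing every off-diagonal team pair to cross colour classes), and add further inclusion atoms coupling the two columns of $T$ to the $k$ class labels so that a pigeonhole argument on colour classes caps $|S|$ at $k$. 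Once $|S|=k$ is guaranteed, $T=S\times S$ follows from the cardinality constraint, and the disjunction between diagonal and edge pairs turns $S$ into a $k$-clique with one vertex per colour, i.e.\ a multi-coloured $k$-clique in $\calG$. The technical heart of the argument will be to design these auxiliary inclusion atoms so that they really preclude the surviving fat teams while still being satisfied on the genuine product team arising from a multi-coloured clique.
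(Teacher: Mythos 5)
Your overall architecture matches the paper's: $\W1$-hardness via a reduction from a clique-type problem in which the pair of inclusion atoms $x\subseteq y\wedge y\subseteq x$ forces the two columns of the team to have the same range $S$, and $\W2$-membership by rewriting the fixed quantifier-free formula as a $\PiForm{2}$-sentence $\psi(X)$ with $X$ encoding $\rel(T)$ and invoking $\W2=[\p\WD\text-\PiForm{2}]^{\FPT}$. The paper's formula is $E(x,y)\wedge x\neq y\wedge y\subseteq x\wedge x\subseteq y$ with target team size $k(k-1)$; yours keeps the diagonal and targets $k^2$. Your diagnosis of the backward direction is correct and important: a team of ordered edges with equal column projections and the right cardinality need not arise from a clique, and your $C_4$, $k=3$ example applies verbatim to the paper's own reduction (six ordered edges of the $4$-cycle with both projections equal to $\{1,2,3,4\}$ form a satisfying team of size $k(k-1)=6$ although no triangle exists). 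So you have correctly identified a soundness problem that the paper's ``straightforward to check'' dismisses.

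The gap is that you never close this hole. The multi-coloured-clique repair is only a plan: the ``further inclusion atoms coupling the two columns of $T$ to the $k$ class labels'' are not written down, and there is a substantive obstacle to writing them. Capping $S$ at one vertex per colour class is an ``at most one'' condition, i.e., a functional dependence of the vertex on its colour ($\depas{c(x)}{x}$); quantifier-free $\INC$-formulas are closed under unions of teams, and such a closure property is exactly what dependence-style constraints violate, so it is far from clear that any conjunction of literals and inclusion atoms over a bounded variable tuple can supply the pigeonhole bound your argument needs. Until those atoms are exhibited and the backward direction is verified, the hardness claim is unproven. A secondary issue concerns membership: your chosen formula contains a disjunction, and the label-variable encoding of the split $T=T_0\cup T_1$ does not determine the cardinality of the resulting relation (the split need not be disjoint, so the labelled team has size anywhere between $k$ and $2^{c}k$, while $\p\WD_\psi$ asks for a solution of one exact cardinality, and a reduction must output a single target). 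The paper sidesteps this by using a $\lor$-free formula---its corollary on quantifier-free formulas is explicitly restricted to formulas without $\lor$---and you could do the same by keeping $x\neq y\wedge E(x,y)$ in place of $x=y\lor E(x,y)$.
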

\begin{proof}
We present a reduction from the $\W 1$-complete problem $\p\Clique$ to $\p\WTD_\varphi$ such that $\varphi$ is a quantifier free $\INC$-formula.
Let $G\dfn (V,E)$ be a graph and $k\in \mathbb N$.
Then, we let $\varphi\coloneqq E(x,y)\wedge x\neq y\wedge y\subseteq x \wedge x\subseteq y$.
We claim that $G$ has a clique of size $k$ if and only if $G,T\models \varphi$ for a team $T$ of size $(k^2-k)$. 
It is straightforward to check that the existence of a $k$-clique is equivalent to $\varphi $ having a satisfying team of cardinality $k(k-1)$ with exactly the same values for $x$ and $y$. 

For containment in $\W2$, it suffices to note that the formula $\varphi$ can be expressed as an $\FO$-sentence $\psi(S)$ with a $\forall\exists$-quantifier prefix where the auxiliary binary predicate $S$ encodes the team $T$.
This gives an $\FPT$-reduction between $\p\WTD_\varphi$ and $\p\WD_\psi$.
The result follows since $\W2\dfn[\p\WD\text-\PiForm{2}]$.
\end{proof}

This result can be strengthened to more general formulas as witnessed by the following corollary.
\begin{corollary}\label{cor:inc-lor}
	For any quantifier-free $\FO(\subseteq)$-formula $\varphi$ without $\lor$, the problem $\p\WTD_\varphi$ is $\W1$-hard and in $\W2$. 
\end{corollary}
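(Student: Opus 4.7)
My plan is to inherit W1-hardness directly from Theorem~\ref{thm:WT-INC-W1}, since the witnessing formula $E(x,y)\wedge x\neq y\wedge y\subseteq x\wedge x\subseteq y$ used there is already quantifier-free and $\lor$-free, hence lies in the fragment considered here. So the real work is the W2 upper bound, which I want to prove uniformly for every $\varphi$ in this fragment.

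For W2 containment, I would mimic the second half of the proof of Theorem~\ref{thm:WT-INC-W1} but make the translation generic. Fix an arbitrary quantifier-free $\INC$-formula $\varphi$ without $\lor$; syntactically it is a conjunction of (possibly negated) first-order atoms, (in-)equalities of terms, and inclusion atoms. Introduce a fresh relation symbol $S$ of arity $|\Vars(\varphi)|$ intended to encode $\rel(T)$, and translate conjunct by conjunct over the extended vocabulary: every (negated) first-order atom or (in-)equality $\alpha$ becomes $\forall\mathbf{s}(S(\mathbf{s})\to\alpha[\mathbf{s}])$, which is $\PiForm{1}$; every inclusion atom $\mathbf{t}\subseteq\mathbf{u}$ becomes $\forall\mathbf{s}_1\bigl(S(\mathbf{s}_1)\to\exists\mathbf{s}_2(S(\mathbf{s}_2)\wedge\mathbf{t}[\mathbf{s}_1]=\mathbf{u}[\mathbf{s}_2])\bigr)$, which is $\PiForm{2}$. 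After pulling the outer universal quantifiers together, the conjunction yields an FO-formula $\psi(S)\in\PiForm{2}$.

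The correctness claim $\calA,T\models\varphi\iff\calA\models\psi(\rel(T)/S)$ is then a routine conjunct-by-conjunct verification against Definition~\ref{def-semantics}. Because the team's domain is exactly $\Vars(\varphi)$ and a team is a \emph{set} of assignments, distinct assignments yield distinct tuples in $\rel(T)$, so $|T|=|\rel(T)|$; the identity map $(\calA,k)\mapsto(\calA,k)$ is therefore an fpt-reduction (Definition~\ref{def:fpt-reduction}) from $\p\WTD_\varphi$ to $\p\WD_\psi$. Invoking $\W2\dfn[\p\WD\text-\PiForm{2}]^{\FPT}$ from Definition~\ref{def:wt} then yields $\p\WTD_\varphi\in\W2$.

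The only conceptual point I would pause on is why the two hypotheses (quantifier-free and no $\lor$) are both essential for staying inside $\PiForm{2}$: a disjunction would force the team to be split as $T=T_0\cup T_1$, which under the encoding requires an additional existential second-order quantifier over the split and thereby increases the quantifier alternation; a leading (or embedded) $\exists$-quantifier in $\varphi$ would likewise demand existentially guessing a supplementing function $f$ before the universal prefix. Neither obstruction is present in the $\lor$-free quantifier-free fragment, so I expect the translation to go through cleanly and the remainder of the argument to be pure bookkeeping.
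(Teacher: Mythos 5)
Your proposal is correct and follows essentially the same route as the paper: hardness is inherited from the quantifier-free, $\lor$-free witness formula of Theorem~\ref{thm:WT-INC-W1}, and $\W2$-membership comes from encoding the team by a fresh relation symbol $S$ of arity $|\Vars(\varphi)|$ and translating each conjunct into a $\forall\exists$-prefixed first-order condition, yielding a $\PiForm{2}$-formula $\psi(S)$ and hence an fpt-reduction to $\p\WD\text-\PiForm{2}$. The paper compresses this translation into a single sentence; your conjunct-by-conjunct version, together with the observation that $|T|=|\rel(T)|$ because the team domain is exactly $\Vars(\varphi)$, is just the worked-out form of the same argument.
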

\begin{proof}
For containment in $\W2$, it suffices to note that the any quantifier-free formula without disjunction can be expressed as an $\FO$-sentence $\psi(S)$ with a $\forall\exists$-quantifier prefix where the auxiliary binary predicate $S$ encodes the team $T$.	
\end{proof}

\begin{theorem}\label{thm:WT-INC-W2}
There is an $\FO(\subseteq)$-formula $\varphi$ with $\forall\exists$-quantifier prefix for which 
the problem $\p\WTD_\varphi$ is $\W2$-complete.
\end{theorem}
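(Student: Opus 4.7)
The plan is to exhibit the explicit $\INC$-formula
\[
\varphi(x) \dfn \forall y\,\exists z\,\bigl((E(y,z)\lor y=z)\land z\subseteq x\bigr)
\]
with $\forall\exists$-prefix, where $E$ is a binary relation symbol, and to show $\W2$-completeness by (i) reducing the $\W2$-complete problem $\p\DomSet$ to $\p\WTD_\varphi$ and (ii) giving an $\FPT$-reduction from $\p\WTD_\varphi$ to $\p\WD_\psi$ for the $\PiForm{2}$-sentence $\psi(S)\dfn \forall y\,\exists z\,(S(z)\land(E(y,z)\lor y=z))$ from Example~\ref{Ex:fd}, which places the problem in $\W2$ by Definition~\ref{def:wt}.

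For $\W2$-hardness I would reduce $\p\DomSet$ via the identity on instances $(\calG,k)$. If $D=\{v_1,\dots,v_k\}$ is a dominating set of $\calG=(V,E)$, let $T\dfn\{s_i:s_i(x)=v_i,\ 1\le i\le k\}$. Evaluating $\varphi$ on $\calG,T$: the universal quantifier produces $T^y_\calG$, and for $\exists z$ the supplementing function $f$ that sends $(v_i,b)\mapsto\{w(b)\}$, where $w(b)\in D$ is a vertex dominating $b$, witnesses both the inclusion atom $z\subseteq x$ (since $w(b)\in D=\rel(T)$) and the disjunction $E(y,z)\lor y=z$ (the required team split is automatic because both disjuncts are flat). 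Conversely, from any team $T$ over $\Vars(\varphi)=\{x\}$ of size $k$ with $\calG,T\models\varphi$, unfolding the team semantics yields that for every $b\in V$ there is some $w\in\rel(T)$ with $E(b,w)\lor b=w$, so $\rel(T)$ dominates $V$; since $T$ has a single free variable, distinct assignments carry distinct $x$-values and hence $|\rel(T)|=|T|=k$.

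For membership in $\W2$ the map $(\calA,k)\mapsto(\calA,k)$ is an $\FPT$-reduction from $\p\WTD_\varphi$ to $\p\WD_\psi$, and its correctness reduces to the equivalence $\calA,T\models\varphi \iff \calA\models\psi(\rel(T))$, with the weight constraint preserved because $|\rel(T)|=|T|$. The main subtlety is verifying this equivalence rigorously: one must observe that the inclusion atom $z\subseteq x$ forces every $z$-value selected by the supplementing function for $\exists z$ to lie in $\rel(T)$, effectively identifying $z$ with an element of the putative dominating set, and that the disjunction $E(y,z)\lor y=z$ collapses to pointwise satisfaction because both disjuncts are flat first-order literals, so the splitting in the team semantics of $\lor$ is immaterial. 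Together these two observations translate the team evaluation of $\varphi$ into the Tarski evaluation of $\psi$ over the relation $\rel(T)$, closing the upper bound.
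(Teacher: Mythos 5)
Your proposal is correct and takes essentially the same route as the paper: the paper uses the formula $\forall x\,\exists y\,(y\subseteq z \wedge (E(x,y)\vee x=y))$ with free variable $z$, which is your $\varphi$ up to renaming of variables, proves hardness by the same identity reduction from $\p\DomSet$, and obtains membership via the same equivalence $\calA,T\models\varphi \iff \calA\models\varphi_d(\rel(T))$ with the sentence from Example~\ref{Ex:fd}. You merely spell out the semantic unfolding (flatness of the disjuncts, the role of the inclusion atom, and $|\rel(T)|=|T|$ for a single free variable) that the paper defers to the argument of Theorem~\ref{D:W1}.
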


\begin{proof}
We present a reduction from the $\W2$-complete problem $\p\DomSet$ to $\p\WTD_\varphi$ such that $\varphi$ is a $\INC$-formula with $\forall\exists$-quantifier prefix.
Let $G\dfn (V,E)$ be a graph and $k\in \mathbb N$.
Then we let, $\varphi\dfn \forall x \exists y (y\subseteq z \wedge ( E(x,y)\vee x=y))$.
It is straightforward to check that $G$ has a dominating set of size $k$ if and only if $G,T\models \varphi$ for a team $T$ with domain $\{z\}$  of size $k$.

For $\W 2$-membership, notice that for all graphs $G$ and teams $T$:
\[
G,T\models \varphi \Leftrightarrow (G, \rel(T))\models \varphi_d(X),
\]
where $\varphi_d(X)$ is the first-order sentence encoding the problem $\DomSet$ (see Example~\ref{Ex:fd}). 
A formal proof for the above equivalence is similar to the one given in Theorem~\ref{D:W1}.
\end{proof}

The next lemma sets the stage for generalizing the two previous theorems to arbitrary levels of the $\W{}$-hierarchy. 
To formulate the result, we assume an encoding of a formula $\psi \in \Gamma^{+}_{t,d}$ (and a truth assignment) by its syntax circuit $A_{\psi}=(A,E,I,o)$, where $A$ is the set of subformulas of $\psi$, $E$ is the immediate subformula relation, $I\subseteq A$ are the variables of $\psi$, $o$ is a constant symbol interpreted by $\psi$. 
Finally a free relation variable $S\subseteq I$ can be used to  represent a truth assignment for the variables. Note that our encoding of $\psi$ works for any $t\in \mathbb N$ but for the definability result below $t$ has to be fixed.

\begin{lemma}\label{INC:WT-level}
Let $t\in \mathbb{N}$. Then there exists a fixed formula $\varphi_t\in \FO(\subseteq)$ with one free variable $z$ such that for all  $\psi \in \Gamma^{+}_{t,d}$ and $k\ge 1$: $\psi$ has a satisfying assignment of weight $k$ if and only if $A_{\psi},T\models \varphi_t$, for some team $T$ of cardinality $k$.
\end{lemma}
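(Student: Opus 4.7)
The plan is to construct a single fixed $\INC$-formula $\varphi_t(z)$ whose nested $\forall/\exists$ quantifier pattern mirrors the alternating AND/OR gate structure of $\Gamma^{+}_{t,d}$, with an innermost inclusion atom $y\subseteq z$ encoding membership in the team-selected assignment. Since $z$ is the only free variable, a team $T$ with $|T|=k$ determines a set $S=\rel(T)=\{\,s(z)\mid s\in T\,\}$ of cardinality $k$, and an FO atom $I(z)$ restricts this to $S\subseteq I$, so $S$ encodes a weight-$k$ assignment to the propositional variables of $\psi$.

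Concretely, I would set
\begin{equation*}
	\varphi_t(z) \dfn I(z) \land \forall y_1 \bigl(\lnot E(y_1,o) \lor \exists y_2 \bigl(E(y_2,y_1) \land \forall y_3 \bigl(\lnot E(y_3,y_2) \lor \cdots \bigr)\bigr)\bigr),
\end{equation*}
with $t+1$ alternating quantifiers $y_1,\dots,y_{t+1}$ (one per gate level of $\Gamma^{+}_{t,d}$, starting with $\forall$ at the root AND), ending in $y_{t+1}\subseteq z$ embedded either in a conjunction $E(y_{t+1},y_t)\land y_{t+1}\subseteq z$ (if the last quantifier is $\exists$) or in a disjunction $\lnot E(y_{t+1},y_t) \lor y_{t+1}\subseteq z$ (if $\forall$). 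Since $\psi\in\Gamma^{+}_{t,d}$ is positive, no negation of inclusion atoms is required, and $\varphi_t$ depends only on $t$ (not on $d$) because the adjacency relation $E$ absorbs all branching of arity $\leq d$.

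I would then prove by induction on $t$ that $A_{\psi},T\models\varphi_t$ iff $\rel(T)$ is a weight-$k$ satisfying assignment of $\psi$. For each universal step, the split required by the team-semantic $\lor$ places all ``non-child'' supplemented assignments into the trivial $\lnot E$-branch and retains the ``child'' assignments in the main branch, which still carries all original $z$-values. For each existential step, a choice function picks, for each current assignment $s$, a non-empty subset of $E$-children of $s(y_i)$, and the conjoined atom $E(y_{i+1},y_i)$ forces these to be genuine children of the subformula chosen above. At the leaf, $y_{t+1}\subseteq z$ holds iff the selected leaf variable lies in $\rel(T)$, which exactly mirrors the classical model-checking game of $\psi$ under the assignment $\rel(T)$.

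The main obstacle will be the team-semantic bookkeeping ensuring that $\rel(T)$ is preserved through every $\lor$-split and every $\exists$-supplementing, so that the innermost inclusion atom still refers to the intended assignment. This holds because the trivial branch of a $\lor$-split absorbs only the bad supplemented assignments, leaving every original $z$-value in the main branch (each $s\in T$ contributes $s(z)$ paired with every $E$-child of the current node), and because existential supplementing never discards assignments. With this invariant in place, the induction on $t$ goes through and yields the claimed equivalence for all $t\in\mathbb{N}$.
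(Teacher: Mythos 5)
Your construction is essentially the paper's: the same nested $\forall x\,(\lnot E \lor \exists y\,(E \land \cdots))$ pattern over the syntax circuit, terminating in an inclusion atom against the lone free variable $z$, and your invariant that the $z$-values $\rel(T)$ survive every $\lor$-split and every supplementation is precisely the direct correctness argument the paper invokes (it defers to the argument of Theorem~\ref{D:W1} and, alternatively, to Galliani--Hella's translation of $S$-positive first-order sentences into $\FO(\subseteq)$). The remaining differences are cosmetic: the paper needs only $t$ quantified variables since the root is the constant $o$ (your count of $t+1$ is off by one), and your top-level conjunct $I(z)$ merely makes explicit the restriction $S\subseteq I$ that the paper builds into the leaf atom $I(x_t)$.
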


\begin{proof} 
Without loss of generality, we assume $d=1$. 
For higher $d$-values, the presented proof easily generalizes via a conjunction/disjunction of arity $d$. 
By the results of Galliani and Hella~\cite{gallhella13}, it suffices to show that the required formula can be expressed by a first-order sentence $\theta(S)$ in which the relation symbol $S$ occurs only postively. Then  the existence of  $\varphi_t(z)$ satisfying 
 \begin{equation}\label{myopic}
 A_{\psi},T\models \varphi_t \Leftrightarrow  A_{\psi}\models \theta(S),
 \end{equation}
 for all non-empty $T$ and $\rel(T)=S$ follows. 
 Note that $\theta(S)$ is not true under the  assignment  setting all the variables to false, but on the other hand $\varphi_t$ is always satisfied for $T=\emptyset$  by the empty team property.  
It is easy to check that $\theta(S)$ can be expressed as follows:
 \begin{equation*}\theta(S)\coloneqq \forall x_1\big(\neg E(o,x_1) \vee \exists x_2(E(x_1,x_2)\wedge \cdots Qx_t(E(x_{t-1},x_t)\wedge I(x_t)\wedge S(x_t)  )\cdots \big).
  \end{equation*}
The relation symbol $S$ has only one occurrence in the formula and it is positive. 
Now by Proposition 20 of \cite{gallhella13}, there exists a formula $\varphi_t$ such \eqref{myopic} holds for the sentence $\forall \vec{x}(S(\vec{x})\rightarrow \theta(S))$ for all $\calA$ and all $T$. 
It is easy to see that $\theta(S)$ is equivalent with $\forall \vec{x}(S(\vec{x})\rightarrow \theta(S))$ modulo the cases when $S=\emptyset$.
In fact, it is straightforward to show that $\varphi_t$ can be obtained from $\theta(S)$ simply by replacing $S(x_t)$ by the inclusion atom $x_t\subseteq z$. 
The proof then is analogous to the proof of Theorem~\ref{D:W1}.
\end{proof}

Notice further that the translation of the formula~$\theta$ to an $\INC$-formula only introduces inclusion atoms and, in particular, does not require any further quantification. 
Therefore, the following corollary follows immediately from the proof in Lemma~\ref{INC:WT-level}.

\begin{corollary}\label{cor:inc-wt}
Let $t\geq 2$ be even. 
Then there is an $\FO(\subseteq)\text-\PiForm{t}$-formula ${\varphi_t}$ for which 
the problem $\p\WTD_{\varphi_t}$ is $\W t$-complete.
Moreover, $\W t \subseteq [\p\WTD\text-\INCPi t]^{\FPT}$ for all even $t\geq 1$ and 
 $\bigcup_{t\ge1}\W t \subseteq [\p\WTD\text-\INC]^{\FPT}$.
\end{corollary}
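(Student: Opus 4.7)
The plan is to combine Lemma~\ref{INC:WT-level} with the characterization of $\W t$ by propositional weighted satisfiability in Proposition~\ref{theorem-wsat}, so as to obtain $\W t$-completeness of $\p\WTD_{\varphi_t}$ for even $t\geq 2$; the two further inclusions then follow easily.

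For hardness, fix an even $t\geq 2$ and let $\varphi_t$ be the formula produced by Lemma~\ref{INC:WT-level}. Since $\p\wsatpos{t,1}{}$ is $\W t$-complete at even levels, it suffices to exhibit an fpt-reduction $\p\wsatpos{t,1}{}\fptreduction\p\WTD_{\varphi_t}$. The syntactic map $(\psi,k)\mapsto(A_\psi,k)$ does the job: constructing the syntax circuit $A_\psi$ is polynomial in $|\psi|$, the parameter is preserved, and Lemma~\ref{INC:WT-level} exactly states that $\psi$ has a weight-$k$ satisfying assignment iff $A_\psi,T\models\varphi_t$ for some team $T$ with $|T|=k$. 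For the matching upper bound, I would inspect the explicit form of $\varphi_t$ from the proof of the lemma: it is obtained from the $\PiForm{t}$-sentence $\theta(S)$ by replacing the single positive occurrence of $S(x_t)$ by the inclusion atom $x_t\subseteq z$. This introduces no new quantifiers, so $\varphi_t\in\INCPi{t}$. Equivalence~\eqref{myopic} from the lemma then gives, for every structure $\calA$ and every non-empty team $T$ with domain $\{z\}$, that $\calA,T\models\varphi_t\iff\calA\models\theta(\rel(T))$; because $T$ has a one-variable domain, distinct assignments differ on $z$, so $|T|=|\rel(T)|$. Hence, after handling the trivial $k=0$ case separately via the empty-team property, the identity is an fpt-reduction $\p\WTD_{\varphi_t}\fptreduction\p\WD_\theta$, and the target problem is in $\W t$ by Definition~\ref{def:wt}.

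The second assertion, $\W t\subseteq[\p\WTD\text-\INCPi{t}]^{\FPT}$ for even $t\geq 1$, is simply the hardness half of the completeness statement just proved. The final assertion follows by combining the paper's earlier observation that $\W t\subseteq\W{t+1}$ with completeness at even levels: for any $t\geq 1$, pick an even $t'\geq t$; then $\W t\subseteq\W{t'}\subseteq[\p\WTD\text-\INCPi{t'}]^{\FPT}\subseteq[\p\WTD\text-\INC]^{\FPT}$. I expect the main subtlety to lie in the membership step, where one must check that substituting the inclusion atom $x_t\subseteq z$ for $S(x_t)$ preserves both the $\PiForm{t}$ quantifier prefix and the cardinality of the witness; both facts, however, are immediate from the explicit syntactic shape of $\theta$.
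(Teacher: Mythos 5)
Your proof is correct and follows essentially the same route as the paper: hardness comes from Proposition~\ref{theorem-wsat} combined with Lemma~\ref{INC:WT-level}, and membership comes from the one-to-one correspondence between size-$k$ teams for $\varphi_t$ and weight-$k$ solutions for the $\PiForm{t}$-sentence $\theta$, which places $\p\WTD_{\varphi_t}$ in $[\p\WD\text-\PiForm{t}]^{\FPT}=\W{t}$. Your write-up is merely more explicit than the paper's (handling $k=0$, noting $|T|=|\rel(T)|$ for the one-variable domain, and spelling out the final inclusion via $\W{t}\subseteq\W{t'}$ for an even $t'\geq t$), but the underlying argument is the same.
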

\begin{proof}
	For the $\W t$-membership of $\p\WTD_{\varphi_t}$, notice that the translation between $\theta$ and the $\INC$-formula $\varphi_t$ in the proof of Lemma~\ref{INC:WT-level} preserves a one-to-one correspondence between the solutions $S$ for $\theta$ and satisfying teams $T$ for $\varphi_t$.
	In other words, $\theta$ has a solution of size $k$ if and only if $\varphi_t$ has a satisfying team of size $k$. 
	This yields $\W t$-membership since $\theta\in \PiForm{t}$ for each $t\geq 1$ (see Def.~\ref{def:wt}).
	The $\W t$-hardness and the containment $\W t \subseteq [\p\WTD\text-\INCPi t]^{\FPT}$ for all even $t\geq 1$ follows from Proposition~\ref{theorem-wsat}.
\end{proof}

We conclude this section by presenting the upper bounds for $\WTD_\varphi$ when $\varphi$ is an arbitrary $ \FO(\subseteq)$-formula.

\begin{theorem}\label{INC:WP}
 $[\p\WTD\text-\FO(\subseteq)]^{\FPT}\subseteq \W{\Ptime}$.
\end{theorem}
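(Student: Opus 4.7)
The plan is to prove, for every fixed $\INC$-formula $\varphi$, that $\p\WTD_\varphi\in\W\Ptime$; since $\W\Ptime$ is closed under fpt-reductions, this immediately yields $[\p\WTD\text-\FO(\subseteq)]^{\FPT}\subseteq\W\Ptime$. I will use the nondeterministic-machine characterization of $\W\Ptime$ recalled in the preliminaries: a parameterized problem lies in $\W\Ptime$ iff it is decidable by an NTM running in time $f(k)\cdot n^{O(1)}$ with at most $h(k)\cdot\log n$ nondeterministic bits. The algorithm for $\p\WTD_\varphi$ is then a standard guess-and-check: first nondeterministically guess a team $T$ of cardinality $k$ over $\Fr(\varphi)$ in $\calA$, then deterministically verify $\calA,T\models\varphi$.

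For the guessing phase, let $m\dfn|\Fr(\varphi)|$; this is a constant depending only on the fixed formula $\varphi$. A team $T$ of size $k$ on $\Fr(\varphi)$ is determined by $k$ assignments, each fixed by $m$ elements of $A$, so $T$ can be encoded using $k\cdot m\cdot\lceil\log|A|\rceil\in O(k\cdot\log n)$ nondeterministic bits, comfortably fitting the $\W\Ptime$ budget $h(k)\cdot\log n$ with $h(k)\dfn k\cdot m$. The machine rejects immediately if $k>|A|^m$ (no team of the required size exists), and it can enforce pairwise distinctness of the guessed assignments while reading them.

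For the verification phase, I will invoke the fact that for every fixed $\INC$-formula $\varphi$ the problem of deciding $\calA,T\models\varphi$ from a given input team $T$ is in $\Ptime$ in $|\calA|+|T|$. This follows from the Galliani-Hella capture result cited in the introduction: encoding $\rel(T)$ as a fresh relation in an expanded structure, the team-based model-checking for $\varphi$ reduces to the $\Ptime$ data complexity of a fixed $\INC$-sentence obtained by the standard translation from a free-variable formula to a sentence. Concretely, atomic inclusion and equality tests are polynomial in $|A|+|T|$, and the constantly many quantifiers in $\varphi$ inflate intermediate teams only by polynomial factors in $n+k$.

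The main conceptual hurdle is that disjunctive splits $T=T_0\cup T_1$ and supplementing functions for existential quantifiers admit exponentially many choices a priori; handling these in polynomial time is exactly the nontrivial content of the polynomial-time model-checking algorithm underlying the Galliani-Hella capture result, and this is what I need to cite. Once that ingredient is in place, combining $O(k\log n)$ guessed bits with polynomial-time deterministic verification gives the desired $\W\Ptime$-algorithm, and the theorem follows.
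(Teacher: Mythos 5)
Your proposal is correct and follows essentially the same route as the paper's own proof: guess the team of size $k$ with $O(k\log|A|)$ nondeterministic bits (constant number of free variables) and verify $\calA,T\models\varphi$ deterministically in polynomial time using the $\Ptime$ data complexity of fixed inclusion logic formulas due to Galliani and Hella. Your additional remarks on closure under fpt-reductions and on why the verification step is the genuinely nontrivial ingredient are accurate but do not change the argument.
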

\begin{proof} 
We prove this via the machine characterization of the class $\W\Ptime$, analogous to the proof for $\FO$-formulas~\cite[Prop.~5.3]{DBLP:series/txtcs/FlumG06}.
Let $\varphi$ be a  $\FO(\subseteq)$-formula with $s$ free variables. An algorithm for the problem  $\p\WTD_\varphi$ proceeds as follows: Given a structure  $\calA$ and a $k$, nondeterministically guess $k$ times an assignment (i.e., an $s$-tuple of elements of $\calA$), then deterministically verify that the team $T$ has cardinality $k$ and $\calA,T\models \varphi$. 
Guessing $T$ requires $s \cdot k\cdot \log |A|$ nondeterministic bits, and the verification that   $\calA,T\models \varphi$  can be done in deterministic polynomial time in $|A|$~\cite{gallhella13}. 
Thus $\p\WTD_\varphi$ is in $\W\Ptime$ because the formula $\varphi$ is fixed and $s$ is a constant.
Moreover, the containment $[\p\WTD\text-\FO(\subseteq)]^{\FPT}\subseteq \W{\Ptime}$ holds since $\p\WTD_\varphi\in \W\Ptime$ for an arbitrary but fixed $\INC$-formula $\varphi$.
\end{proof}

\subsection{Dependence Logic}
	First observe that if $\varphi$ is a $\D$-sentence, then the problem $\p \WTD_\varphi$ is $\para\NP$-complete.
	This is due to the reason that the data complexity of fixed $\D$-sentences is already $\NP$-complete~\cite{DBLP:books/daglib/0030191}.

	\begin{theorem}\label{D:sentences}
		There is a $\D$-sentence $\varphi$, such that the problem $\p\WTD_\varphi$ is $\para\NP$-complete.
	\end{theorem}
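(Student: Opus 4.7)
The plan is to exploit two simple facts. First, for a sentence $\varphi$ (where $\Vars(\varphi)=\emptyset$) there are exactly two teams over $\Vars(\varphi)$: the empty team $\emptyset$, which satisfies $\varphi$ by the empty team property, and the singleton team $\{\emptyset\}$, which satisfies $\varphi$ iff $\calA\models\varphi$. Second, the data complexity of fixed $\D$-sentences is already $\NP$-complete, as shown in \cite{DBLP:books/daglib/0030191}. Combining these observations with the slice-based characterization of $\para\NP$-hardness mentioned in the preliminaries gives the result essentially for free.

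Concretely, the proof would proceed in three steps. First, I would fix a specific $\D$-sentence $\varphi$ whose model checking $\calA\models\varphi$ is $\NP$-hard; any of the standard encodings of, say, $3$-colorability or $3$-SAT into a $\D$-sentence (as constructed by V\"a\"an\"anen) works. Second, I would analyze the instances $(\calA,k)$ of $\p\WTD_\varphi$: for $k=0$ the trivial team $\emptyset$ always witnesses a yes-instance; for $k=1$ the problem reduces exactly to checking whether $\calA,\{\emptyset\}\models\varphi$, i.e., to classical model checking of $\varphi$; and for $k\geq 2$ the instance is always a no-instance, since no team of size $\geq 2$ exists over the empty variable domain.

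Third, I would combine the analysis of the slices with the two general facts from the preliminaries. The $1$-slice $(\p\WTD_\varphi)_1$ coincides with the $\NP$-hard model-checking problem for $\varphi$, so by the observation that classical $\mathcal{C}$-hardness of some slice suffices for $\para\mathcal{C}$-hardness, $\p\WTD_\varphi$ is $\para\NP$-hard. Membership in $\para\NP$ is immediate: an algorithm first checks the constant-time conditions on $k$, and for $k=1$ nondeterministically verifies $\calA\models\varphi$ in polynomial time (which is in $\NP$ since model checking for a fixed $\D$-sentence lies in $\NP$).

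I do not anticipate any real obstacle; the argument is essentially a direct consequence of the machinery already laid out in the preliminaries. The only point that requires a moment of care is that the domain of free variables is empty for a sentence, so the team-size constraint collapses the problem to three cases (trivial yes, model checking, trivial no), which is exactly what makes the reduction from classical data complexity to the parameterized setting go through.
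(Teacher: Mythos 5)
Your proposal is correct and follows essentially the same route as the paper: both arguments identify the $1$-slice of $\p\WTD_\varphi$ with the classical data-complexity (model-checking) problem for a fixed $\D$-sentence, which is $\NP$-complete, and invoke the slice criterion for $\para\NP$-hardness together with the obvious $\para\NP$ upper bound. Your explicit case analysis for $k=0$, $k=1$, and $k\geq 2$ is a slightly more detailed write-up of what the paper states tersely, but the underlying idea is identical.
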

 	\begin{proof}
 	Recall that a $\D$-sentence $\varphi$ has a satisfying team $T$ if and only if $\{\emptyset\}\models \varphi$.
 	For hardness, consider the data complexity of the model checking for $\D$-sentences. 
 	The problem asks whether an input structure $\calA$ satisfies a fixed $\D$-sentence $\varphi$.
 	Then $\varphi$ is true in $\calA$ if and only if $\calA,\{\emptyset\}\models \varphi$ if and only if there is a team $T$ such that $|T|=1$ and $\calA,T\models \varphi$.
 	\end{proof}
 	

%

Now, we relate the $\W{}$-hierarchy to the weighted definability for dependence logic.
This also settles the complexity of $\p \WTD{}$ for $\D$-formulas.
%
In the following, we prove that already one universal quantifier is enough in $\D$ to define $\W1$-complete problems.

\begin{theorem}\label{D:W1}
	There is a $\D$-formula $\varphi$ with only one universal quantifier such that the problem $\p\WTD_\varphi$ is $\W1$-complete.
\end{theorem}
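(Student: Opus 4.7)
The plan is to establish $\W1$-completeness in two parts: hardness via a reduction from $\p\Clique$, and membership via an fpt-reduction to $\p\WD_\psi$ for a suitable $\PiForm{1}$-formula $\psi$, from which membership follows by Definition~\ref{def:wt}.

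For the hardness direction, I would choose a $\D$-formula of shape $\varphi(x)\dfn\forall y\,\chi(y,x)$ whose quantifier-free body $\chi$ is a small disjunction combining a dependence atom $\depa{y}{x}$ with $\FO$-literals such as $y=x$ and $E(x,y)$. Given a graph $G=(V,E)$, a team $T$ with free variable $x$ and $|T|=k$ encodes a candidate $k$-set $S$ of vertices. Upon universal quantification in $y$, each extension $s^y_a$ is placed by the disjunctive team split: into the $y=x$-part if $a=s(x)$, into the $E(x,y)$-part if $(s(x),a)\in E$, and otherwise necessarily into the $\depa{y}{x}$-part. That dependence atom then imposes a pairwise constraint on this forced subteam: for any two distinct values $v_1\neq v_2$ in $S$ and any $a$ with $a\neq v_1,v_2$ and $a\notin N(v_1)\cup N(v_2)$, one would obtain two assignments sharing a $y$-value but with different $x$-values, violating $\depa{y}{x}$. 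With a suitable gadget on $G$ (in the spirit of Theorem~\ref{thm:WT-INC-W1}) this forced pairwise condition aligns with the existence of a $k$-clique.

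For membership, I would translate $\varphi$ to a first-order sentence $\psi(X)\in\PiForm{1}$ over the team relation $X=\rel(T)$. Each $\FO$-literal $L$ in $\chi$ becomes the universal closure $\forall\bar a\,(X(\bar a)\to L(\bar a))$; each dependence atom $\depa{u}{v}$ becomes the universal first-order assertion that the $\mathbf u$-coordinates of tuples in $X$ determine the $v$-coordinate; the outer $\forall y$ stays in place. Team-semantic disjunction is intrinsically existential (it witnesses a splitting $T_0\cup T_1=T$), which in general would push the translation out of $\PiForm{1}$; however, for the specific $\varphi$ above each extension $s^y_a$ has a canonical placement determined by per-assignment $\FO$-conditions, with the leftover forced into the dependence-atom subteam, so the split is canonical and $\psi$ remains in $\PiForm{1}$. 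This yields $\p\WTD_\varphi\fptreduction\p\WD_\psi$ and hence membership in $\W1$.

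The main obstacle is the design of $\varphi$: one must balance expressiveness (capturing a $\W1$-hard pairwise condition using only one explicit $\forall$, since team semantics supplies the second universal ``for all $s\in T$'' implicitly) against translatability (keeping the disjunctive splits canonical so that the first-order translation stays within $\PiForm{1}$). The clique condition, which classically needs two universal quantifiers in $\FO$, is exactly what one obtains by coupling one explicit $\forall y$ with the implicit team-quantifier.
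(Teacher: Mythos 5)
Your overall mechanism is the right one and matches the paper's: a single explicit $\forall y$ coupled with the implicit ``for all $s\in T$'' of team semantics, a disjunction whose $\FO$-literals act as guards so that certain extensions $s^y_a$ are \emph{forced} into the subteam evaluating the dependence atom, and a $\PiForm{1}$ translation $\psi(X)$ for membership (the paper is equally terse there, simply observing that $\varphi$ is equivalent to the familiar $\PiForm{1}$ definition of independent sets). The gap is in the hardness direction: the ``suitable gadget'' you defer is exactly the nontrivial content of the proof, and your concrete candidate does not work as stated. With guards $y=x$ and $E(x,y)$ over the plain graph $G$, the assignments forced into the $\depa{y}{x}$-part are those with $s(y)\neq s(x)$ and $(s(x),s(y))\notin E$, so the dependence atom enforces: for every vertex $a\in V$, at most one $v\in S$ with $v\neq a$ is a non-neighbour of $a$. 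This is a domination-type condition on $S$ relative to \emph{all} of $V$ (as you yourself note, it involves arbitrary $a\notin N(v_1)\cup N(v_2)$), not the clique condition ``every pair of distinct $u,v\in S$ is adjacent'': a clique $S$ can violate it (take any $a$ with no neighbours in $S$), and non-cliques can satisfy it. Nothing in your formula restricts the universally quantified $y$ to range over witnesses for pairs \emph{inside} $S$.

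The paper resolves this by changing the structure, not just the formula: it reduces from $\p\IndSet$ over the incidence structure with domain $V\cup E$, unary predicates $N$ (vertices) and $P$ (edges), and the incidence relation $I$, using
\[
\varphi(x)\dfn\forall y\bigl(N(x)\land(\neg P(y)\lor\neg I(x,y)\lor\depas{y}{x})\bigr).
\]
Here the forced subteam consists exactly of the pairs (endpoint of $y$, edge $y$), so $\depas{y}{x}$ says that each edge determines at most one of its endpoints lying in $\rel(T)$, i.e., $\rel(T)$ is an independent set; both directions of the correctness claim then go through. You should either adopt this incidence encoding (applied to the complement graph if you insist on starting from $\p\Clique$) or supply an equivalent gadget; without it the reduction is not established.
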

\begin{proof}
	We present a reduction from the $\W 1$-complete problem $\p\IndSet$ to $\p\WTD_\varphi$ such that $\varphi$ is $\D$-formula with only one universal quantifier.
	An input to $\IndSet$ is a graph $\calG\dfn (V,E)$ and a number $k\in \mathbb N$.
	The question is whether there is a set $S$ of size $k$ in $\calG$ such that $(a,b)\not \in E$ for every $a,b\in S$. 
	We let $\tau \dfn \{N^1, P^1, I^2\}$ as our vocabulary where $N,P$ are unary relations 
	and $I$ is a binary relation symbol. 
	Moreover the $\tau$-structure $\calA$ is such that: $\dom(\calA)\dfn V\cup E$, $N^\calA\dfn V, P^\calA\dfn E$ and $I^\calA$ simulates the edge relation $E^\calG$.
	That is, $I\dfn\{\,(a,b), (c,b)\mid a,c\in V,\text{ and } b\in P \text{ denotes the edge }(a,c)\in E\,\}$.
	Finally we define a $\D$-formula $\varphi$ over a single free variable $x$ as in the following.
	\[
	\varphi(x) \dfn \forall y \bigl(N(x) \land (\neg P(y)\lor \neg I(x,y) \lor \depas{y}{x}) \bigr)
	\]

	The correctness of our reduction is established via the following claim and also shows that the formula $\varphi$ is, in fact, equivalent to the familiar definition of independent sets via a $\PiForm{1}$-formula; hence, $\p\WTD_\varphi$ is $\W1$-complete.
	\begin{claim}
		There is a team $T$ over $x$ in $\calA$ such that $|T|=k$ and $\calA,T\models \varphi$ if and only if there is an independent set in $\calG$ of size $k$.\label{clm:indset}
	\end{claim}
	It remains to prove the claim.
	Suppose that $T=\{\,s_i\mid i\leq k\,\}$ is a team over $x$ for $\varphi$ such that $s_i(x)=a_i$ for $a_i\in A$.
	Moreover, let $T'=\{\,s_{i,j}\mid i\leq k,j\leq |\calA|\,\}$ denote the supplemented team, that is, $s_{i,j}(x)=a_i$ and $s_{i,j}(y)=a_j$ for every $a_j\in \calA$.
	We prove that $S=\{a_i \mid \exists s\in T, s(x)=a_i \}$ constitutes an independent set in $\calG$.
	Let $a_i,a_j\in S$, then there are $s_i,s_j \in T$ such that $s_i(x)=a_i$, $s_j(x)=a_j$.
	Suppose further that $(a_i,a_j)=e\in E^\calG$.
	Then, $T'\models P(e)\land E(a_i,e)$ and $T'\models E(a_j,e)$ but $T'\not\models \depas{y}{x}$ since there are $s_{i,j},s_{j,j}\in T'$ such that $s_{i,j}(xy)=a_ia_j$ and $s_{j,j}= a_ja_j$.
	In other words, $s_{i,j}(y)= s_{j,j}(y)$ but $s_{i,j}(x)\not= s_{j,j}(x)$.
	Consequently, $T'\not\models (\neg P(y)\lor \neg I(x,y)\lor \depas{y}{x})$ and $T\not\models \varphi$, which is a contradiction.
		
	Conversely, if there is an independent set $S$ of size $k$ in $\calG$ then we prove that $T\models \varphi(x)$ for $T =\{s_i \mid i\leq k, s_i(x)\in S\}$. 
	Clearly, the supplemented team $T'(x,y)$ has the following effect: for every $y$ that corresponds to an edge $e$ between elements $a_i,a_j\in A$, at most one of its endpoint $a_i$ or $a_j$ is in $T(x)$, which is the case if and only if $S$ is in independent set. \qedhere
\end{proof}

Once again, we prove the next lemma that generalizes the previous theorem to arbitrary levels of the $\W{}$-hierarchy. 
\begin{lemma}\label{D:WT-level}
Let $t\in \mathbb{N}$. Then there exists a fixed formula $\varphi_t\in \D$ with one free variable $z$ such that for all  $\psi \in \Gamma^{-}_{t,d}$ and $k\ge 1$: $\psi$ has a satisfying assignment of weight $k$ if and only if $A_{\psi},T\models \varphi_t$, for some team $T$ of cardinality $k$.
\end{lemma}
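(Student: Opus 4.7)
The plan is to mirror the proof of Lemma~\ref{INC:WT-level}, but replacing the Galliani--Hella translation for inclusion logic with the Kontinen--V\"a\"an\"anen correspondence~\cite{KontinenV09} for dependence logic, which states that the open formulas of $\D$ correspond to $\ESO$-sentences with a team-encoding relation occurring only \emph{negatively}. Since $\Gamma^{-}_{t,d}$ formulas have negative literals at the leaves, the leaf condition ``variable $x_t$ is false under assignment $S$'' is expressed as $\neg S(x_t)$, i.e., a negative occurrence of $S$, which is exactly the shape required by the translation.

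Concretely, assuming $d=1$ without loss of generality (the case $d>1$ follows via a bounded $d$-ary conjunction/disjunction block, as in Lemma~\ref{INC:WT-level}), I would first exhibit the FO-sentence
\[
\theta(S) \coloneqq \forall x_1\bigl(\neg E(o,x_1)\lor\exists x_2\bigl(E(x_1,x_2)\land\cdots Qx_t(E(x_{t-1},x_t)\land I(x_t)\land \neg S(x_t))\cdots\bigr)\bigr),
\]
where the alternation $\forall\exists\forall\cdots Q$ matches the $\PiForm{t}$-prefix corresponding to the outermost conjunction of $\psi\in\Gamma^{-}_{t,d}$ and $S$ occurs exactly once and only negatively. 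A straightforward induction on $t$ over the syntax structure $A_\psi$ shows that $A_\psi\models\theta(S)$ iff the assignment whose set of true variables is $S$ satisfies $\psi$.

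Next, invoking~\cite{KontinenV09}, from $\theta(S)$ one obtains an open $\D$-formula $\varphi_t(z)$ such that for every structure $\calA$ and every non-empty team $T$ over $\{z\}$,
\[
\calA,T\models\varphi_t \iff \calA\models\theta(\rel(T)).
\]
As in Lemma~\ref{INC:WT-level}, one first replaces $\theta(S)$ by the equivalent $\forall\vec{x}(S(\vec{x})\to\theta(S))$ (which agrees with $\theta(S)$ whenever $S\neq\emptyset$), and then $\varphi_t$ is obtained by instantiating the Kontinen--V\"a\"an\"anen construction, essentially turning the single negative occurrence of $S(x_t)$ into a dependence atom binding $x_t$ to the sole free variable $z$, analogous to the bookkeeping used in the proof of Theorem~\ref{D:W1}. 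Because $z$ is a single variable, $|T|=|\rel(T)|$, so teams of cardinality $k$ correspond bijectively to assignments of weight $k$ for $\psi$; the empty-team case is irrelevant since the statement assumes $k\ge 1$.

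The main obstacle will be making the replacement of $\neg S(x_t)$ by a dependence atom (together with the surrounding first-order scaffolding) preserve the cardinality-preserving bijection between $\rel(T)$ and the set of true variables, uniformly in $t$; this is the only non-routine point, since the quantifier-alternation handling inside $\theta$ is verbatim the one in Lemma~\ref{INC:WT-level} and the parity issue ($\Gamma^{-}$ instead of $\Gamma^{+}$) is precisely absorbed by switching from inclusion atoms to the negatively-occurring team relation.
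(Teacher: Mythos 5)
Your proposal is correct and follows essentially the same route as the paper's proof: assume $d=1$, exhibit the identical sentence $\theta(S)$ with a single negative occurrence of $S$, and invoke the Kontinen--V\"a\"an\"anen translation~\cite{KontinenV09} to obtain $\varphi_t(z)$ for non-empty teams. The only cosmetic difference is that you carry over the $\forall\vec{x}(S(\vec{x})\to\theta(S))$ detour from the inclusion-logic lemma, which is not needed here since the dependence-logic translation applies directly to negative occurrences; this does no harm.
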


\begin{proof} 
Without loss of generality, we assume that $d=1$. 
Otherwise, the presented proof will easily generalize to larger values of $d$ by a disjunction/conjunction of arity $d$. 
By the results of \cite{KontinenV09}, it suffices to show that the required formula can be expressed by a first-order sentence $\theta(S)$ in which the relation symbol $S$ occurs only negatively. Then  the existence of  $\varphi_t(z)$ satisfying 
 \begin{equation}\label{eq:dep-wt}
 A_{\psi},T\models \varphi_t \Leftrightarrow  A_{\psi}\models \theta(S),
 \end{equation}
 for all non-empty $T$ and $\rel(T)=S$ follows.
Now, it is easy to check that $\theta(S)$ can be expressed as follows:
 \begin{equation*}
 \theta(S)\dfn \forall x_1\big(\neg E(o,x_1) \vee \exists x_2(E(x_1,x_2)\wedge \cdots Qx_t(E(x_{t-1},x_t)\wedge I(x_t)\wedge \neg S(x_t)  )\cdots \big).
  \end{equation*}
The relation symbol $S$ appears only once in the formula and this appearance is negative.
%
%
\end{proof}
Notice further that the translation of the formula~$\theta$ to a $\D$-formula only introduces dependence atoms and, in particular, does not require any further quantification. 
Therefore, the following corollary (with proof analogous to Corollary~\ref{cor:inc-wt}) follows. 
Recall that every dependence logic formula can be put into the $\forall\exists$-normal form. 
As a result, tracking the quantifier prefix in Lemma~\ref{D:WT-level} is not useful and we get the much stronger statement that the whole $\W{}$-hierarchy is already contained in $\DPi{2}$. 

\begin{corollary}\label{cor:D-wt}
Let $t\geq 1$ be odd. 
Then there is an $\D\text-\PiForm{2}$-formula $\varphi_t$ for which 
the problem $\p\WTD_{\varphi_t}$ is $\W t$-complete.
Moreover, $\bigcup_{t\ge1}\W t \subseteq [\p\WTD\text-\DPi2]^{\FPT}$.
\end{corollary}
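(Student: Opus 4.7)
The plan is to mimic the proof of Corollary~\ref{cor:inc-wt}, but using the negative side of the machinery (Proposition~\ref{theorem-wsat} plus Lemma~\ref{D:WT-level}) and the $\forall\exists$-normal form for dependence logic. I would start with \emph{$\W t$-hardness}: by Proposition~\ref{theorem-wsat}, for odd $t$ the problem $\p\wsatneg{t,1}$ is $\W t$-complete. Given an instance $(\psi,k)$ with $\psi\in\Gamma^{-}_{t,1}$, Lemma~\ref{D:WT-level} produces in $\FPT$-time the syntax structure $A_\psi$ such that $\psi$ has a weight-$k$ satisfying assignment iff there is a team $T$ of size $k$ with $A_\psi,T\models\varphi_t$. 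Since $\varphi_t$ is fixed and $k$ is preserved, this is an fpt-reduction to $\p\WTD_{\varphi_t}$, giving $\W t$-hardness.

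Next I would argue \emph{$\W t$-membership}. The translation in Lemma~\ref{D:WT-level} is obtained by passing through an auxiliary $\FO$-sentence $\theta(S)\in\PiForm{t}$ in which $S$ occurs only negatively, and the equivalence~\eqref{eq:dep-wt} establishes a bijective correspondence between non-empty teams $T$ for $\varphi_t$ and relations $S=\rel(T)$ satisfying $\theta$, with $|T|=|S|$. Hence $(\calA,k)\in\p\WTD_{\varphi_t}$ iff $(\calA,k)\in\p\WD_\theta$ (handling the trivial $k=0$ case separately via the empty team property). Since $\theta\in\PiForm{t}$, Definition~\ref{def:wt} yields $\p\WTD_{\varphi_t}\in\W t$.

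Finally, to upgrade the formula to $\DPi{2}$, I would invoke the $\forall\exists$-normal form for dependence logic~\cite[Theorem 6.15]{DBLP:books/daglib/0030191}: every $\D$-formula is logically equivalent (in the team-semantic sense) to one with a $\forall\exists$-quantifier prefix, so without loss of generality we may take $\varphi_t\in\DPi{2}$. This establishes the first sentence of the corollary for every odd $t$; combining with Corollary~\ref{cor:inc-wt} (or noting that for even $t$ the same argument applies once one uses $\wsatpos{t,1}$, since dependence logic subsumes the relevant fragment) the chain $\W t\subseteq[\p\WTD\text-\DPi2]^{\FPT}$ holds at every level, and taking the union over $t$ yields $\bigcup_{t\ge1}\W t\subseteq[\p\WTD\text-\DPi2]^{\FPT}$. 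The only step requiring real care is verifying that the $\forall\exists$-normalization does not inflate the parameter $k$ nor touch the free variable $z$ that controls the team size, but this is immediate from locality of $\D$ and the fact that the normalization affects only the quantifier prefix, not the free variables.
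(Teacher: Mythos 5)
Your argument follows the paper's proof essentially step for step: $\W t$-hardness from Proposition~\ref{theorem-wsat} combined with the reduction of Lemma~\ref{D:WT-level}, $\W t$-membership from the one-to-one correspondence between size-$k$ satisfying teams for $\varphi_t$ and size-$k$ solutions $S$ of the $\PiForm{t}$-sentence $\theta(S)$ together with Definition~\ref{def:wt}, and the collapse to $\DPi{2}$ via the $\forall\exists$-normal form of dependence logic (which indeed leaves the free variable $z$ untouched). The one weak spot is your parenthetical handling of the even levels: combining with Corollary~\ref{cor:inc-wt} only yields $\W t\subseteq[\p\WTD\text-\INCPi{t}]^{\FPT}$, a statement about inclusion logic rather than about $\DPi{2}$; and running the dependence-logic construction on $\p\wsatpos{t,1}{}$ does not work, because Lemma~\ref{D:WT-level} hinges on the relation symbol $S$ occurring only \emph{negatively} in $\theta$ (this is what the translation of \cite{KontinenV09} into $\D$ requires), which is precisely what the positive fragment destroys. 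Fortunately that detour is unnecessary: since $\W{t}\subseteq\W{t+1}$ for all $t$, the union over odd $t$ already equals $\bigcup_{t\ge 1}\W t$, so the final containment $\bigcup_{t\ge1}\W t \subseteq [\p\WTD\text-\DPi2]^{\FPT}$ follows from the odd levels alone.
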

Finally, $\D$ captures the class $\para\NP$ as established below.

\begin{theorem}\label{thm:WT-D}
 $[\p\WTD\text-\D]^{\FPT}=\para\NP$.
\end{theorem}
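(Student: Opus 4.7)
The plan is to establish the two inclusions separately. The hardness direction, $\para\NP\subseteq[\p\WTD\text-\D]^{\FPT}$, is immediate from Theorem~\ref{D:sentences}: that theorem already exhibits a fixed $\D$-sentence $\varphi$ for which $\p\WTD_\varphi$ is $\para\NP$-complete, so the whole class $\para\NP$ is contained in $[\p\WTD\text-\D]^{\FPT}$ via the trivial fpt-reduction. The real work is the matching upper bound $[\p\WTD\text-\D]^{\FPT}\subseteq\para\NP$, and since $\para\NP$ is closed under fpt-reductions it suffices to show $\p\WTD_\varphi\in\para\NP$ for every fixed $\D$-formula $\varphi$.

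For the upper bound, I would design a standard guess-and-check nondeterministic algorithm. Given an input $(\calA,k)$, let $r\dfn|\Vars(\varphi)|$, which is a constant because $\varphi$ is fixed. First, nondeterministically guess a team $T$ over $\Vars(\varphi)$ of cardinality exactly $k$; this amounts to guessing $k$ assignments $s\colon\Vars(\varphi)\to A$, which takes $k\cdot r\cdot\lceil\log|A|\rceil$ nondeterministic bits, i.e.\ polynomial in the input size. Then deterministically verify both that $|T|=k$ and that $\calA,T\models\varphi$.

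The key step is the verification that $\calA,T\models\varphi$ for a given team $T$. Here I would invoke the characterization of Kontinen and V{\"a}{\"a}n{\"a}nen~\cite{KontinenV09}: every $\D$-formula $\varphi(\mathbf x)$ is equivalent to an $\ESO$-sentence $\theta_\varphi(S)$ in which $S$ is a fresh relation symbol (of arity $r$) occurring only negatively, in the sense that for every structure $\calA$ and every non-empty team $T$ with $\rel(T)=S^\calA$ we have $\calA,T\models\varphi$ iff $\calA\models\theta_\varphi(S)$. Since $\varphi$ is fixed, $\theta_\varphi$ is a fixed $\ESO$-sentence, and the data complexity of $\ESO$-model-checking is in $\NP$ by Fagin's theorem. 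The empty-team and $\{\emptyset\}$ cases are handled separately in $\Ptime$ by the empty team property. Combining the polynomial-length guess with the $\NP$ verification yields a nondeterministic algorithm running in time $f(k)\cdot|\calA|^{O(1)}$, witnessing membership in $\para\NP$.

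The only mild obstacle is ensuring that the verification step is actually an $\NP$-computation rather than something larger: I would make sure to exploit that $r$ and $\theta_\varphi$ depend only on the fixed $\varphi$, so that the translation to $\ESO$ is done once offline and the resulting NP-bound does not depend on $k$. Everything else is bookkeeping, so the two inclusions combine to give $[\p\WTD\text-\D]^{\FPT}=\para\NP$.
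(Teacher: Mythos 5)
Your proposal is correct and shares the paper's top-level structure (hardness straight from Theorem~\ref{D:sentences}, membership by guessing a size-$k$ team and verifying), but the verification step is done by a genuinely different route. The paper verifies $\calA,T\models\varphi$ \emph{directly}: it unfolds the semantics of the fixed formula, nondeterministically guessing the supplemented/duplicated teams $T_i$ along the quantifier prefix and the subteam decomposition at each disjunction, then checking atoms in polynomial time --- in effect re-proving from scratch that model checking a fixed $\D$-formula against a given team is in $\NP$. You instead black-box this via the Kontinen--V\"a\"an\"anen translation of $\varphi(\mathbf x)$ into an $\ESO$-sentence $\theta_\varphi(S)$ with $S$ occurring only negatively, and appeal to Fagin's theorem for the $\NP$ data complexity of the fixed sentence $\theta_\varphi$. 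Both are sound; your argument is shorter and leans on the expressivity result the paper already cites in its introduction, while the paper's explicit algorithm is self-contained and makes the nondeterministic resource accounting (team sizes bounded by $k\cdot|A|^\ell$) visible. One small inaccuracy to fix: you claim the $T=\{\emptyset\}$ case is handled ``in $\Ptime$ by the empty team property,'' but the empty team property only covers $T=\emptyset$; the case $T=\{\emptyset\}$ arises exactly when $\varphi$ is a sentence, where the data complexity is $\NP$-complete (this is the content of Theorem~\ref{D:sentences}), not polynomial. This does not damage the proof --- an $\NP$ check suffices for $\para\NP$-membership --- but the justification should read ``in $\NP$ via the $\ESO$ translation of the sentence'' rather than ``in $\Ptime$.''
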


\begin{proof}Hardness follows from Theorem~\ref{D:sentences}. For membership,  we present the following non-deterministic algorithm that runs in polynomial time in the size of $\calA$.
Notice that since the formula is fixed, we have fixed many connectives including splits and existential quantifiers.
The idea of the algorithm is that it guesses a team $T$ of size $k$, as well as, a sequence $T_i$ for $i\in \mathbb N$ of teams which corresponds to the operations of duplication/supplementation and splits according to the formula $\varphi$.
In other words, let $\varphi=Q_1x_1Q_2x_2\dots Q_\ell x_\ell \psi$ where $Q\in\{\forall,\exists\}$ and $\psi$ is a quantifier free $\D$-formula.
Then the algorithm has the following steps.
\begin{itemize}
	\item Guess a team $T_0$ of size $k$ over $\Vars(\varphi)$.
	\item For each $i\leq \ell$, guess a team $T_i$ over $\Vars(\varphi)\cup\{x_1,\dots,x_i\}$ such that: if $Q_i = \forall$, then $T_i = P^x_{\calA}$ and if $Q_i = \exists$, then $T_i = P^x_{f}$ where $f\colon P\rightarrow \mathcal P(A)\setminus\emptyset $ and $P=T_{i-1}$.

	Notice that $|T_i| =|T_{i-1}|$ if $Q_i=\exists$ (because $\D$ is downwards closed) and $|T_i| =|T_{i-1}|\cdot |A|$ otherwise. 
	As a result, we have that $|T_i| \leq k\cdot |A|^i$.
	Once the team $T_\ell$ has been guessed, it remains to determine whether $T_\ell \models \psi$. 
	Since the data complexity of $\D$ is still $\NP$-complete for quantifier free formulas, this step is  non-trivial.
	Nevertheless, we can list recursively all the subformulas of $\psi$ in terms of its syntax tree.
	This helps in labelling a subteams of $T_\ell$ according to the connectives of $\psi$.
	\item Guess subteams for subformulas of $\psi$, such that: $T_\psi=T_\ell $. For each subformula $\alpha = \beta\land \gamma$ of $\psi$, $T_\beta=T_\gamma = T_\alpha$, and for each $\alpha = \beta\lor \gamma$, $T_\beta\cup T_\gamma = T_\alpha$.
	
	Clearly, the size of the subteam $T_\alpha$ for each $\alpha$ is atmost $k\cdot |A|^\ell$.
\end{itemize}

Notice that for atomic formulas the truth evaluation $T_\alpha\models \alpha$ can be determined in polynomial time.
Moreover, the intermediate steps including the verification of team duplication and supplementation can also be determined in polynomial time.
This results in $\para\NP$-membership of $\WTD_\varphi$ for a $\D$-formula $\varphi$.
%
%
\end{proof}

\subsection{Independence Logic}
In this section, we turn to independence logic. The following theorem is obtained from the results in the previous sections and the fact that any $\ESO$-sentence $\psi(S)$ (with an extra relation encoding the team) can be represented by an independence logic formula \cite{galliani12}. 

\begin{theorem}\label{thm:WT-I}
\begin{enumerate}
\item For all $t\in\mathbb N$ there is an $\IND$-formula $\varphi_t$ such that  $\p\WTD_{\varphi_t}$ is $\W t$-complete.
\item There is  an $\IND$-formula  $\varphi_w$ such that  $\p\WTD_{\varphi_w}$ is $\W\Ptime$-complete.
\item There is a $\IND$-sentence $\varphi$, such that the problem $\p\WTD_\varphi$ is $\para\NP$-complete.
\item $[\p\WTD\text-\IND]^{\FPT}=\para\NP$.
\end{enumerate}
\end{theorem}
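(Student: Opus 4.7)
The plan is to leverage the classical result of Galliani~\cite{galliani12} that every $\ESO$-sentence $\theta(S)$ with a free relation symbol $S$ is captured by some $\IND$-formula $\varphi(\mathbf x)$ in the sense that $\calA,T\models\varphi$ iff $\calA\models\theta(\rel(T)/S)$ for all structures $\calA$ and all non-empty teams $T$ over $\mathbf x$, with no polarity restriction on $S$ (unlike the dependence-logic case) and with a reversible translation. Combined with the cardinality-matching technique of Lemmas~\ref{INC:WT-level} and~\ref{D:WT-level}, which structures the free variable tuple so that $|T|=|\rel(T)|$, this will give me $\FPT$-preserving reductions between $\p\WTD_\varphi$ for the constructed $\IND$-formulas and the weighted Fagin or weighted satisfiability problem for the source $\ESO$-sentence.

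For part~(1), I would fix $t$, take a $\PiForm t$-formula $\theta_t(S)$ witnessing the $\W t$-completeness of $\p\WD_{\theta_t}$ (which exists by Definition~\ref{def:wt}), Galliani-translate it to an $\IND$-formula $\varphi_t$, and conclude $\W t$-completeness of $\p\WTD_{\varphi_t}$ from the two-way $\FPT$-equivalence. For part~(2), I would start from an $\ESO$-sentence $\theta_w(S)$ encoding weighted circuit satisfiability, in which the existential second-order witness is \emph{functionally determined} by the input assignment $S$ through forward gate evaluation; Galliani-translating yields $\varphi_w$. $\W\Ptime$-hardness is then inherited from weighted circuit satisfiability. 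For $\W\Ptime$-membership I would argue analogously to Theorem~\ref{INC:WP}: nondeterministically guess a team of size $k$ using $O(k\log|A|)$ bits and then verify in polynomial time, where verification is possible precisely because the functionality of $\theta_w$'s second-order witness reduces the check to polynomial-time circuit evaluation. This step—engineering $\varphi_w$ so that its data complexity is polynomial despite $\IND$ being $\NP$-complete for data complexity in general—is the main obstacle.

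For part~(3), I would invoke the $\NP$-completeness of the data complexity of $\IND$-sentences~\cite{galliani12} and argue as in Theorem~\ref{D:sentences}: since the only non-trivial team for a sentence is $\{\emptyset\}$ of size one, the $1$-slice of $\p\WTD_\varphi$ is already classically $\NP$-hard, yielding $\para\NP$-hardness; $\para\NP$-membership is inherited from part~(4). For part~(4), $\para\NP$-hardness comes from~(3); for the matching upper bound, note that for any fixed $\IND$-formula $\varphi$ and input $(\calA,k)$, either $k>|A|^{|\Vars(\varphi)|}$ (in which case we reject) or the candidate team $T$ has bitlength polynomial in the input, so I can nondeterministically guess $T$ and then invoke the $\NP$-data complexity of $\IND$ to verify $\calA,T\models\varphi$ nondeterministically in polynomial time, placing the whole procedure in $\NP$ and hence in $\para\NP$.
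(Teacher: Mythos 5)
Your proposal is correct and follows essentially the same route as the paper: Galliani's $\ESO$-to-$\IND$ translation, weighted circuit satisfiability for the $\W\Ptime$ case, the $\NP$-complete data complexity of sentences for $\para\NP$-hardness, and a guess-and-verify algorithm for the $\para\NP$ upper bound. The only minor differences are in part (1), where the paper simply observes that the $\W t$-complete formulas already constructed for $\D$ (odd $t$, Corollary~\ref{cor:D-wt}) and $\INC$ (even $t$, Corollary~\ref{cor:inc-wt}) live in the superlogic $\IND$, whereas you re-derive such formulas by Galliani-translating generic $\PiForm{t}$-formulas, and in part (2), where you make the $\W\Ptime$-membership argument (polynomial-time verifiability thanks to the functionally determined second-order witness) explicit while the paper leaves it implicit.
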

\begin{proof}

The first and third claim follow immediately from the fact that the logics $\D$ and $\INC$ are sublogics of $\IND$~\cite{galliani12} together with Theorem~\ref{thm:WT-D}. 

For the second claim, we use the fact that $\p\WSAT(\text{CIRC})$ is $\W\Ptime$-complete, where CIRC is the class of all propositional formulas encoded as Boolean circuits.
Note that the circuit value problem can be readily expressed by an $\ESO$-sentence $\psi(S)$, where $S$ represents an input for the circuit. 
More precisely, assume we a given a DAG $(A,E,D,K,I,o)$ encoding a Boolean circuit. Here $A$ is the set of nodes/gates,  $E$ is the edge relation, $I\subseteq A$ are the input gates of the circuit, $o$ is the unique output, $D\subseteq A$ is the set of OR-gates, and $K\subseteq A$  the set of AND-gates. A Boolean input for the circuit is represented by a subset $S\subseteq I$, i.e., a gate $g$ gets input $1$ if and only if  $g\in S$. Now in $\ESO$ we can existentially quantify a proof tree witnessing  the circuit accepting the input $S$. In other words, we quantify a subset $P\subseteq A$
 such that 
 \begin{itemize}
	\item $o\in A$,
	\item $P\cap I=S$,
	\item for all $g\in P\cap D$ there exists at least one $g'\in P$ such that $E(g',g)$,
	\item for all $g\in P\cap K$ and all $g'$, if $E(g',g)$ then $g'\in P$. 
\end{itemize}
It is straightforward to check that the above conditions can be  expressed  in first-order logic.

Finally, the hardness part of the third claim  follows again from the fact that $\D$ is a sublogic of $\IND$ and the containment proof is analogous to that of Theorem~\ref{thm:WT-D}.
\end{proof}

\section{Conclusion}

We have defined and studied the parameterized complexity of weighted team definability with respect to formulas of  several team-based logics. 
Our results show that for plain first-order formulas weighted team definability differs greatly from weighted Fagin definability; the former being computationally much simpler. 
For dependence, independence and inclusion logic formulas, the complexity of weighted team definability ranges between the classes $\W t$ and $\para\NP$. 
Now, these results provide a wide range of natural complete problems for the aforementioned complexity classes enriching the landscape in a nontrivial way. 
Interestingly, the sentences in the considered logics depict different complexities: namely, membership in $\FPT$ for $\INC$ and $\para\NP$-completeness for $\D$ and $\IND$.
The main open question is whether the converse directions of Corollary~\ref{cor:inc-wt} or Theorem~\ref{INC:WP} can be proven, i.e., if one of the inclusions $\bigcup_{t\in\mathbb N}\W t\subseteq [\p\WTD\text-\FO(\subseteq)]^{\FPT}\subseteq\W\Ptime$ is in fact an equality.


\bibliography{main.bib}

\appendix

\end{document}